\newtheorem{theorem}{Theorem}
\newtheorem{lemma}{Lemma}
\newtheorem{definition-proposition}{Definition-Proposition}
\theoremstyle{remark}
\newtheorem{definition}{Definition}
\newtheorem{remark}{Remark}
\newcommand{\ds}{\displaystyle}
\def\EXP{\operatorname{{e}}}
\newcommand{\uop}{\boldsymbol{u}}
\newcommand{\vop}{\boldsymbol{v}}
\newcommand{\pop}{\boldsymbol{p}}
\newcommand{\xop}{\boldsymbol{x}}
\newcommand{\ii}{\mathsf{i}}
\newcommand{\bb}{\mathsf{b}}
\newcommand{\bP}{\mathbb{P}}
\newcommand{\bF}{\mathbb{F}}
\newcommand{\R}{\mathbb{R}}
\newcommand{\Z}{\mathbb{Z}}
\newcommand{\C}{\mathbb{C}}
\newcommand{\energy}{\varepsilon}
\newcommand{\bchi}{\boldsymbol{\chi}}
\title{Spectral equations for the modular oscillator}
\author{Rinat M.~Kashaev}
\address{Section de math\'ematiques, Universit\'e de Gen\`eve,
2-4 rue du Li\`evre, 1211 Gen\`eve 4, Suisse\\}
\email{Rinat.Kashaev@unige.ch}
\author{Sergey M.~Sergeev}
\address{Faculty of Education Science Technology \& Mathematics\\
University of Canberra, Bruce ACT 2601, Australia.}
\address{
Department of Theoretical Physics,
Research School of Physics and Engineering\\
Australian National University, Canberra, ACT 0200, Australia.}
\email{Sergey.Sergeev@canberra.edu.au}
\date{March 22, 2017}
\thanks{The work is partially supported by Australian Research Council, and the work of the first-named author is supported in part by Swiss National Science Foundation.}
 \dedicatory{In memory of Ludwig Faddeev}
\begin{document}











\begin{abstract} 
 Motivated by applications for non-perturbative topological strings in toric Calabi--Yau manifolds, we discuss the spectral problem  for a pair of commuting modular conjugate (in the sense of Faddeev) Harper type operators, corresponding to a special case of the quantized mirror curve  of local $\bP^1\times\bP^1$ and complex values of Planck's constant. We illustrate our analytical results by numerical calculations. 
\end{abstract}
\maketitle
\section{Introduction}
Topological string theory is known for its prominent role in relating different subjects of mathematical physics and mathematics such as Chern--Simons theory, Gromov--Witten invariants, mirror symmetry, enumerative geometry, quantum topology,
integrable systems, supersymmetric gauge theory, random matrix theory, etc.  The recent progress in string theory has lead to remarkable connections between spectral theory, integrable systems and local mirror symmetry. The idea of relating topological strings in toric Calabi--Yau manifolds with quantum mechanical spectral problems originating from integrable systems was put forward in \cite{MR2191887} and subsequently has been materialised to a powerful and numerically testable conjecture of Grassi--Hatsuda--Mari\~no~\cite{MR3556519}.  It has been shown that quantization of mirror curves leads to trace class (one-dimensional) quantum mechanical operators, whose spectral properties not only contain the full information of the enumerative geometry of the underlying Calabi--Yau manifold, but also allow to formulate the string theory non-perturbatively, see \cite{Marino2015} for a review and references therein.  Outside the context of string theory, these results are preceded by connections of some quantum mechanical spectral problems with integrable systems and conformal field theory~\cite{MR1733841,MR1832065}.

In the case of toric Calabi--Yau threefold known as local $\bP^1\times\bP^1$ or local $\bF_0$, the corresponding operator obtained from quantisation of the mirror curve is of the form
\begin{equation}\label{Hams0}
\boldsymbol{O}_{\bF_0,m}=\vop+\vop^{-1} + \uop + m\uop^{-1}
\end{equation}
with positive self-adjoint operators $\uop$ and $\vop$ satisfying the Heisenberg--Weyl commutation relation
\begin{equation}\label{hwcr}
\uop\vop=\EXP^{\ii\hbar}\vop\uop,\quad \hbar\in\R_{>0},
\end{equation}
and positive real parameter $m$.
In the special case $m=1$, operator $\boldsymbol{O}_{\bF_0,1}$ structurally resembles the Harper  operator \cite{0370-1298-68-10-304}, but the fact that operators $\uop$ and $\vop$ are positive self-adjoint rather than unitary makes its quantum mechanical content very different: it has purely discrete positive spectrum, and its inverse $\boldsymbol{O}_{\bF_0,m}^{-1}$ is of trace class~\cite{MR3537342,MR3546986}. Nonetheless, recent findings in \cite{HatsudaKatsuraTachikawa2016,HatsudaSugimotoXu2017} indicate to hidden connections of the two cases through Faddeev's modular duality~\cite{MR1345554}.  

The spectral problem of operator \eqref{Hams0} has been addressed in \cite{MR2730782} as a part of a larger class of integrable systems related to supersymmetric gauge theories, and, more recently, in \cite{ Kashani-Poor2016} from the perspective of exact WKB approximation and in \cite{MarinoZakany2016} by using a matrix integral representation of the eigenfunctions. 
One can approach the same problem from the standpoint of quantum integrable systems by noting that both the Bethe ansatz equations and the quantum method of separation of variables can be formulated in more general and universal framework of Baxter's $T-Q$ functional difference equations~\cite{MR690578} which, in their turn, can be interpreted as (one-dimensional) quantum mechanical spectral problems. Complemented with Faddeev's modular duality idea, the second-named author has shown in \cite{MR2165903} that the modular $T-Q$ equations of general form admit a particularly tractable approach to their solution in the so called ``strong coupling'' regime corresponding to a complexified parameter $\hbar$ in the commutation relation~\eqref{hwcr}, namely $\hbar=2\pi \EXP^{\ii2\theta}$ with real $\theta\in]0,\pi/2[$. In this case, operators $\uop$ and $\vop$ cannot be neither self-adjoint nor unitary, but one can choose them to be normal and satisfying the relation
\begin{equation}
\uop\vop^\dagger=\vop^\dagger\uop,
\end{equation}
implying that operator~\eqref{Hams0} stays normal, and thus its spectral problem still makes sense.

In this paper, motivated by topological string applications, we address the spectral problem of operator~\eqref{Hams0} in the special case $m=1$ and $\hbar=2\pi\EXP^{\ii2\theta}$ with  $\theta\in[0,\pi/2[$. Our approach is based on solving Baxter's $T-Q$ equations in terms of analytic functions on the entire complex plane. The strong coupling regime $\theta\in ]0,\pi/2[$ is treated along the approach of \cite{MR2165903}, while the case $\theta=0$ (corresponding to $\hbar =2\pi$) is treated by solving the functional difference equation directly. We illustrate our analytical results with numerical calculations.

The organisation of the paper is as follows. In Section~\ref{sec1} we formulate the problem, fix the notation and conventions.
In Sections~\ref{sec2}--\ref{sec5}, we describe the solution of the problem in the strongly coupled regime. First, we introduce the main functional difference equation and describe some of its properties (Sec.~\ref{sec2}), then we construct the eigenfunctions with minimal number of poles (Sec.~\ref{sec3}), describe the spectral equations as conditions of cancelling the poles (Sec.~\ref{sec4}), and give some numerical analysis in the special case $\hbar=2\pi\ii$ (Sec.~\ref{sec5}). In the final Section~\ref{sec6}, we give a  solution of the case $\hbar=2\pi$.

\subsection*{Acknowledgements} We would like thank J\o rgen Ellegaard Andersen, Vladimir Bazhanov, Tobias Ekholm, Ludwig Faddeev, Vladimir Mangazeev, Marcos Mari\~no, Szabolc Zakany for valuable discussions. Our special thanks go to Alba Grassi, Marcos Mari\~no and Szabolc Zakany for sharing with us their results prior to publication. The work is partially supported by Australian Research Council, and the work of R.M.K. is partially supported by Swiss National Science Foundation.

\section{Background from basic quantum mechanics}\label{sec1}

Let  $\xop$ and $\pop$ be normalised  self-adjoint quantum mechanical position and momentum operators in the Hilbert space $L^2(\R)$ defined by the functional equalities
\begin{equation}\label{x-p-op}
\xop \psi(x)=x \psi(x),\quad \pop \psi(x)=\frac1{2\pi\ii}\frac{\partial \psi(x)}{\partial x}
\end{equation}
so that the Heisenberg commutation relation between them takes the form
\begin{equation}
[\pop,\xop]:=\pop\xop-\xop\pop=\frac{1}{2\pi\ii},
\end{equation}
i.e. with specific choice of Planck's constant  $h=2\pi \hbar=1$. As these are unbounded operators, it is assumed  that the function $\psi(x)$  in \eqref{x-p-op} is taken from the respective domains $D(\xop)$ and $D(\pop)$.

For each non-zero complex parameter $\bb$, there are  two pairs of exponential operators
\begin{equation}\label{expop1}
\uop:=\EXP^{2\pi\bb\xop}\;,\quad \vop:=\EXP^{2\pi\bb\pop}\;,
\end{equation}
and
\begin{equation}\label{expop2}
\bar{\uop}:=\EXP^{2\pi\bb^{-1}\xop}\;,\quad \bar{\vop}:=\EXP^{2\pi\bb^{-1}\pop}\;,
\end{equation}
which satisfy the Weyl commutations relations
\begin{equation}
\uop\vop=q^2\vop\uop\;,\quad
\bar{\vop}\bar{\uop}=\bar{q}^2\bar{\uop}\bar{\vop}\;,
\end{equation}
where
\begin{equation}\label{q-qb}
q:=\EXP^{\ii\pi\bb^2}\;,\quad \bar{q}:=\EXP^{-\ii\pi\bb^{-2}}\;,
\end{equation}
and all other commutation relations are trivial. In coordinate representation~\eqref{x-p-op}, we have
\begin{equation}\label{u-v-op}
\uop\psi(x)=\EXP^{2\pi\bb x}\psi(x),\quad \vop\psi(x)=\psi(x-\ii\bb),
\end{equation}
and 
\begin{equation}\label{bu-bv-op}
\bar\uop\psi(x)=\EXP^{2\pi\bb^{-1} x}\psi(x),\quad \bar\vop\psi(x)=\psi(x-\ii\bb^{-1}).
\end{equation}
The case where parameter $\bb$ is pure imaginary corresponds to unitary exponentials, while in this paper
we will mainly assume the \emph{strong coupling} regime
\begin{equation}\label{sc}
\bb=\EXP^{\ii\theta}\;,\quad \theta\in]0,\pi/2[\;\subset \R,
\end{equation}
which corresponds to unbounded exponentials. In this latter case, the operation  bar in \eqref{expop2}--\eqref{q-qb} and \eqref{bu-bv-op} is the complex conjugation for numbers and Hermitian conjugation for operators. We will keep that notation in a broader context of analytically continued parameters and variables always assuming that
\begin{equation}
|q|<1\quad\Leftrightarrow\quad \Im \bb^2>0\quad \Leftrightarrow\quad  |\bar{q}|<1\;.
\end{equation}
As in \eqref{x-p-op}, the function $\psi(x)$ in \eqref{u-v-op} and \eqref{bu-bv-op} is assumed to belong to respective domains of unbounded operators. For example, the domain of operator $\vop$ consists of analytic  functions in the open strip between the real axis $\R\subset\C$ and the line  $-\ii\bb+\R$, which are continuous on the closure of the strip, and which restrict to square integrable functions on the boundary of the strip.

We remark that  the simultaneous appearance in a theory of both pairs of exponential operators \eqref{expop1} and \eqref{expop2} is a manifestation of \emph{Faddeev's modular duality}~\cite{MR1345554}.


\subsection*{Statement of the problem}

We address the spectral problem  of the following pair of commuting operators (to be called \emph{Hamiltonians}):
\begin{equation}\label{Hams}
\boldsymbol{H}=\vop+\vop^{-1} + \uop + \uop^{-1},\quad
\bar{\boldsymbol{H}}=\bar{\vop}+\bar{\vop}^{-1} + \bar{\uop}+\bar{\uop}^{-1}\;.
\end{equation}
 In the strong coupling regime~\eqref{sc}, $\bar{\boldsymbol{H}}=\boldsymbol{H}^\dagger$ so that commutativity of $\boldsymbol{H}$ and $\bar{\boldsymbol{H}}$ means that  $\boldsymbol{H}$ is a normal operator.
In the coordinate representation~\eqref{x-p-op}, the spectral problems
\begin{equation}
\boldsymbol{H}\psi = \energy \psi,\quad \bar{\boldsymbol{H}} \psi = \bar{\energy }\psi
\end{equation}
correspond to the following pair of functional difference equations:
\begin{align}\label{DEs}
 \psi(x+\ii\bb)+\psi(x-\ii\bb) &= (\energy -2\cosh(2\pi\bb x)) \psi(x),\\
\label{DEsc}\psi\left(x-\ii\bb^{-1}\right) + \psi\left(x+\ii\bb^{-1}\right) &= (\bar{\energy } - 2\cosh(2\pi\bb^{-1} x))\psi(x).
\end{align}
By definition, solving these spectral problems corresponds to construction of a function $\psi(x)$ and determination of  specific values of $\energy $ and $\bar{\energy }$ such that 
\begin{enumerate}\label{conditions}
\item $\psi(x)$ is analytic in an open strip around the real axis whose closure contains $\pm\ii \bb$ and  $\pm\ii \bb^{-1}$, and it is continuous on the closure of that strip;
\item $\psi(x)$ is square integrable on  the real axis and all of its translates by $\pm\ii \bb$ and  $\pm\ii \bb^{-1}$;
\item  equations~(\ref{DEs}), \eqref{DEsc} are satisfied for all $x\in\R$.
\end{enumerate}

\bigskip
\begin{remark}
 As we will see, the solution of the spectral problems~\eqref{DEs}, \eqref{DEsc} is given by analytic functions on the entire complex plane.
\end{remark}
\noindent\textbf{Conventions.} Throughout the text, each time when we use the variables $u$, $\bar{u}$ and $x$, we assume that they are related  as follows:
\begin{equation}
u=\EXP^{2\pi\bb x}\;,\quad \bar{u} = \EXP^{2\pi x/\bb}\;.
\end{equation}
Similarly,  as soon as we use the variables $s$, $\bar{s}$ and $\sigma$, we assume that they are related as follows:
\begin{equation}
s=\EXP^{2\pi\bb\sigma}\;,\quad \bar{s}=\EXP^{2\pi\sigma/\bb}\;.
\end{equation}
Else, if $\ds f = f(u,\energy ,q)$, then we write $\ds \bar{f}=\overline{f(u,\energy ,q)}$ meaning  the  analytic continuation of the complex conjugation in the strong coupling regime.

\begin{remark} In the limit $\bb\to 0$ one has
\begin{equation}
\boldsymbol{H} = 4+(2\pi\bb)^2(\pop^2+\xop^2)+\mathcal{O}(\bb^4)\;,
\end{equation}
what justifies the term ``oscillator''.
\end{remark}

\section{The main functional equation}\label{sec2}

In this section, we study the following functional finite difference equation 
\begin{equation}\label{chi-equation}
f\left(\frac{u}{q^2}\right) + q^2 u^2 f(q^2 u) = (1-\energy u+u^2) f(u).
\end{equation}
We can rewrite  it as a first order difference matrix equation
 \begin{equation}\label{mform}
\left(\begin{array}{cc}
f\left(\frac{u}{q^2}\right)\\
f(u)
\end{array}\right) \;=\;
L(u)\left(\begin{array}{cc}
f\left(u\right)\\
f(q^2u)
\end{array}\right)
\end{equation}
where
\begin{equation}\label{lop}
L(u) := \left(\begin{array}{cc}
\ds 1-\energy u+u^2 & \ds -q^2u^2\\
\ds 1 & \ds 0
\end{array}\right).
\end{equation}
Upon iteration of \eqref{mform}, we obtain 
\begin{equation}\label{itmform}
\left(\begin{array}{cc}
f\left(\frac{u}{q^2}\right)\\
f(u)
\end{array}\right) \;=\;
M_{n}(u)\left(\begin{array}{cc}
f\left(q^{2n-2}u\right)\\
f(q^{2n}u)
\end{array}\right),\quad \forall n\in\Z_{>0},
\end{equation}
where
\begin{equation}
M_n(u):=L(u)L(q^2u)\cdots L(q^{2(n-1)}u)
\end{equation}
is a sequence of matrix valued polynomials in $u$ (the degree of $M_n(u)$ is $2n$) which satisfies the following obvious relations
\begin{equation}
M_{m+n}(u)=M_{m}(u)M_n(q^{2m}u),\quad \forall m,n\in\Z_{\ge0}.
\end{equation}
\subsection{The regular solution $\bchi_q(u,\energy )$}
The limit $n\to\infty$ leads to a matrix valued entire function  $M_\infty(u)$, i.e. a  holomorphic function on the entire complex plane of the variable $u$.  This follows from the following functional analytic argument. 

For any $r\in\R_{>0}$, let $V_r$ be the vector space  of  $\C^2$-valued continuous functions on the closed disk 
\begin{equation}
D_r:=\{z\in \C\ \vert\ |z|\le r\}.
\end{equation}
It is a Banach space with respect to the norm
\begin{equation}
\| g\|_r:=\max_{z\in D_r}\|g(z)\|,\quad \|g(z)\|:=\max_{i\in\{1,2\}}|g_{i}(z)|,\quad \forall g\in V_r.
\end{equation}
We have the following monotonicity property of the norms with different disk radii:
\begin{equation}
s\le r\quad\Rightarrow\quad \left\|g\vert_{D_s}\right\|_s\le\|g\|_r,\quad \forall g\in V_r.
\end{equation}

\begin{lemma}\label{lem1}
For any $r\in\R_{>0}$, the sequence of restrictions $M_n\vert_{D_r}$, $n\in\Z_{n>0}$, 
forms a (fundamental) Cauchy sequence with respect to the induced norm in the (Banach) algebra of bounded linear operators over $V_r$.
\end{lemma}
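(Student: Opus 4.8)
The plan is to realise each $M_n|_{D_r}$ as the multiplication operator $g\mapsto\big(z\mapsto M_n(z)g(z)\big)$ on $V_r$, whose operator norm equals $\sup_{z\in D_r}\|M_n(z)\|$, where from now on $\|\cdot\|$ denotes the matrix norm on $2\times2$ matrices induced by the $\max$-norm on $\C^2$, i.e. the maximal absolute row sum. Thus it suffices to show that the matrix-valued functions $M_n$ form a Cauchy sequence for the norm $\|A\|_r:=\sup_{z\in D_r}\|A(z)\|$, and I would do this by proving $\sum_{n}\|M_{n+1}-M_n\|_r<\infty$, working from the recursion $M_{n+1}(u)=M_n(u)L(q^{2n}u)$.

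First I would establish a uniform bound $\|M_n\|_r\le K$ for all $n$. Writing $L(q^{2k}u)=\begin{pmatrix}1-\energy q^{2k}u+q^{4k}u^2 & -q^{2+4k}u^2\\ 1 & 0\end{pmatrix}$ and estimating its two absolute row sums on $D_r$, one obtains $\|L(q^{2k}u)\|\le 1+C|q|^{2k}$ for every $u\in D_r$, with a constant $C=C(r,\energy,q)$ that is finite because $|q|<1$. Submultiplicativity of the matrix norm then gives, uniformly on $D_r$, $\|M_n(u)\|\le\prod_{k=0}^{n-1}(1+C|q|^{2k})\le\exp\!\big(C/(1-|q|^2)\big)=:K$.

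The heart of the argument is the estimate of $M_{n+1}-M_n=M_n\big(L(q^{2n}u)-I\big)$, and here lies the main obstacle: the factors do \emph{not} tend to the identity. Since $|q|<1$ one has $L(q^{2n}u)\to L(0)=\begin{pmatrix}1&0\\1&0\end{pmatrix}$, which is idempotent but not equal to $I$, so $L(q^{2n}u)-I$ stays bounded away from $0$ (its bottom row remains $(1,-1)$) and the naive telescoping fails. I would resolve this by splitting $L(q^{2n}u)-I=(L(0)-I)+E_n$ with $\|E_n\|_r=O(|q|^{2n})$. The contribution $M_nE_n$ is then $O(|q|^{2n})$ by the uniform bound $K$. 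For the remaining term, $L(0)-I=\begin{pmatrix}0&0\\1&-1\end{pmatrix}$ has nonzero entries only in its bottom row, so $M_n(L(0)-I)$ is governed entirely by the \emph{second} column of $M_n$; and that column is small, since by the recursion it equals $\beta_{n-1}$ times the first column of $M_{n-1}$, where $\beta_{n-1}=-q^{2+4(n-1)}u^2$. Hence $\|M_n(L(0)-I)\|_r\le 2K\,r^2|q|^{2+4(n-1)}=O(|q|^{4(n-1)})$.

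Combining the two contributions gives $\|M_{n+1}-M_n\|_r=O(|q|^{2n})$, which is summable because $|q|<1$; consequently $(M_n|_{D_r})$ is a Cauchy sequence in the Banach algebra of bounded operators over $V_r$. I expect the one genuinely delicate point to be exactly the recognition that convergence is driven not by the factors approaching $I$, but by the geometric decay of the second column of the partial products, combined with the idempotency $L(0)^2=L(0)$ which keeps the first column uniformly bounded; the row-sum estimates and the geometric summation are then routine.
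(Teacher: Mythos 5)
Your proof is correct, and it takes a genuinely different route from the paper's. The shared part is the framework: the same max-row-sum matrix norm, the identification of $M_n\vert_{D_r}$ with a multiplication operator, and the uniform bound $\|M_n\|_r\le K$ obtained from the shrinking arguments $q^{2k}u$ (the paper bounds $\prod_i\|L\|_{r|q|^{2i}}$; your $\|L(q^{2k}u)\|\le 1+C|q|^{2k}$ is the same mechanism). The divergence is in the Cauchy estimate itself. The paper works at the level of polynomial coefficients: it bounds $\|M_{m,k}\|\le 3T^k/(|q|^2;|q|^2)_\infty$ uniformly in $m$, uses $M_k(0)=L(0)$ (idempotency) to write $M_m(u)-M_n(u)=uA_{m,n}(u)$, and then estimates $M_{m+n+1}(u)-M_{m+1}(u)=M_m(u)\,uq^{2m}A_{n+1,1}(uq^{2m})$, which forces the auxiliary threshold index $N_r$ with $rT|q|^{2N_r}<1$. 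You instead telescope one step at a time, split off the non-vanishing limit $L(0)-I$, and neutralize it through the column structure: the second column of $M_n$ carries the explicit factor $-q^{4n-2}u^2$. That structural fact is precisely the identity $b_n(q^2u)=-q^{4n+2}u^2c_n(u)$ which the paper itself derives from the intertwining relation $L(u)M_n(q^2u)=M_n(u)L(q^{2n}u)$ --- but only \emph{after} the lemma, as a remark on convergence rates; by invoking it inside the proof you bypass the coefficient combinatorics entirely and get the summable bound $\|M_{n+1}-M_n\|_r=O(|q|^{2n})$, hence absolute convergence of the telescoping series, which is marginally stronger than Cauchyness, comes with an explicit geometric rate, and needs no threshold index. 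What the paper's longer computation buys in exchange is the uniform control of the Taylor coefficients $M_{m,k}$, which your argument does not produce. Two small points worth making explicit in a write-up: your identity ``second column of $M_n$ equals $\beta_{n-1}$ times the first column of $M_{n-1}$'' needs the convention $M_0=I$ at $n=1$, and the step $\|M_n(L(0)-I)\|_r\le 2Kr^2|q|^{4n-2}$ silently uses that each matrix entry is dominated by its row sum --- both are fine as stated, just deserve a sentence.
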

\begin{proof}
On the level of matrix valued continuous functions $A\colon D_r\to \operatorname{End}(\C^2)$, the induced norm is calculated as follows:
\begin{equation}
\| A\|_r=\max_{z\in D_r}\| A(z)\|,\quad \|A(z)\|:=\max_{i\in\{1,2\}}\sum_{j=1}^2|A_{i,j}(z)|.
\end{equation}
Using this formula and the explicit form~\eqref{lop} of the polynomial matrix $L(u)$, we have
\begin{equation}\label{normlop}
 \| L\|_r\ge 1,\quad \forall r\in\R_{\ge 0},
\end{equation}
and
\begin{equation}\label{normlop1}
\| L\|_r\le 1 +r(r+1)T,\quad T:=\max_{i\in\{1,2\}}\|L_i\|\ge 1+|q|^2,
\end{equation}
where $L_i\in\operatorname{End}(\C^2)$ are the expansion coefficients of the matrix $L(u)$:
\begin{equation}
L(u)=\sum_{i=0}^2 L_i u^i,
\end{equation}
the coefficient $L_0=L(0)$ being a projection matrix of unit norm:
\begin{equation}
L_0^2=L_0,\quad \| L_0\|=1.
\end{equation}
That allows us to get a uniform in $m$ upper bound  for the norms of $M_m(u)$:
\begin{equation}\label{est1}
\|M_m\|_r\le\max_{z\in D_r}\prod_{i=0}^{m-1}\|L(zq^{2i})\|\le\prod_{i=0}^{m-1}\| L\|_{r|q|^{2i}}\le \prod_{i=0}^{\infty}\| L\|_{r|q|^{2i}}=:\nu(r),
\end{equation}
where the infinite product converges due to inequality~\eqref{normlop1}. On the other hand, the norms of the coefficients in the expansions
\begin{equation}
M_m(u)=\sum_{k=0}^{2m} M_{m,k}u^k,\quad
M_{m,k}:=\sum_{\substack{0\le i_1,\dots, i_m\le 2\\ i_1+\dots+i_m=k}}L_{i_1}\dots L_{i_m} q^{2\sum_{j=1}^{m-1}ji_{j+1}}.
\end{equation}
are also bounded  from above uniformly on $m$:
\begin{equation}
 \| M_{m,k}\|\le \sum_{\substack{0\le i_1,\dots, i_m\le 2\\ i_1+\dots+i_m=k}}\|L_{i_1}\|\dots \| L_{i_m}\|  |q|^{2\sum_{j=1}^{m-1}ji_{j+1}}
 \le  \frac{3T^k}{(|q|^2;|q|^2)_{m-1}}\le \frac{3T^k}{(|q|^2;|q|^2)_\infty}
\end{equation}
with the standard notation for the deformed Pochhammer symbols
\begin{equation}
(x;q)_n:=\prod_{i=0}^{n-1}(1-xq^i).
\end{equation}
For any $m,n\in\Z_{>0}$, there exists a matrix valued polynomial $A_{m,n}(u)$ of degree $2\max(m,n)-1$ such that
\begin{equation}
M_{m}(u)-M_n(u)=u A_{m,n}(u).
\end{equation}
Explicitly, we have
\begin{equation}
A_{m,n}(u)=\sum_{k=0}^{2\max(m,n)-1}(M_{m,k}-M_{n,k}) u^{k} 
\end{equation}
so that
\begin{equation}\label{est2}
\|A_{m,n}\|_r\le\sum_{k=0}^{2\max(m,n)-1}(\|M_{m,k}\|+\| M_{n,k}\|) r^{k} \le  \frac{6\left(1-(rT)^{2\max(m,n)}\right)}{(|q|^2;|q|^2)_\infty(1-rT)}.
\end{equation}
Let $N_r\in\Z_{>0}$ be such that $rT|q|^{2N_r}<1$. Then for any $m\ge N_r$ and $n\ge0$, taking into account the formula
\begin{equation}
M_{m+n+1}(u)-M_{m+1}(u)=M_m(u)uq^{2m}A_{n+1,1}(uq^{2m})
\end{equation}
and the inequalities~\eqref{est1} and \eqref{est2}, we obtain
\begin{multline}
\|M_{m+n+1}-M_{m+1}\|_r\le  |q|^{2m}r\|M_m\|_r\|A_{n+1,1}\|_{r|q|^{2m}}\\
\le |q|^{2m}r\nu(r) \frac{6(1-(rT|q|^{2m})^{2n+2})}{(|q|^2;|q|^2)_\infty(1-rT|q|^{2m})}
\le |q|^{2m} \frac{6r\nu(r)}{(|q|^2;|q|^2)_\infty(1-rT|q|^{2N_r})}
\end{multline}
The obtained equality implies that $(M_m\vert_{D_r})_{m>0}$ is a Cauchy sequence in the Banach algebra of bounded linear operators over $V_r$.
\end{proof}
Lemma~\ref{lem1} implies the uniform convergence of the matrix coefficients of $M_n(u)$ on all compact subsets of $\C$. Indeed , let $f_n(u)$ be a matrix element of $M_n(u)$ and  $K\subset \C$  a compact set. Then there exists $r\in\R_{>0}$ such that $K\subset D_r$ and  thus
\begin{equation}
|f_{m}(u)-f_{n}(u)|\le \| M_m(u)-M_n(u)\|\le \| M_m-M_n\|_r,\quad \forall u\in K.
\end{equation}
Thus, $M_\infty(u)$ is an entire function on $\C$ as a uniformly convergent limit on all compact subsets of $\C$ of a sequence of polynomial functions. 

By taking into account the equalities
\begin{equation}
M_\infty(u)=L(u)M_\infty(q^2u)=M_\infty(u)L(0),\quad M_\infty(0)=L(0),
\end{equation}
we immediately arrive to the following structure of the matrix $M_\infty(u)$:
\begin{equation}\label{infpr}
M_\infty(u)=\left(
\begin{array}{cc}
\ds \bchi_q\left(q^{-2}u\right) & \ds 0 \\
\ds \bchi_q(u) & \ds 0
\end{array}\right),
\end{equation}
where $ \bchi_q(u)= \bchi_q(u,\energy )$ is an entire function normalised so that  $ \bchi_q(0)=1$ and which solves the functional difference equation~\eqref{chi-equation}. 
On the other hand,  if write
 \begin{equation}\label{exmn}
M_n(u) = \left(\begin{array}{cc}
\ds a_n(u) & \ds b_n(u)\\
\ds c_n(u) & \ds d_n(u)
\end{array}\right),
\end{equation}
then the matrix equality
\begin{equation}
L(u)M_n(q^2u)=M_n(u)L(q^{2n}u)
\end{equation}
implies that
\begin{equation}\label{bndn}
b_n(q^2u)=-q^{4n+2}u^2 c_n(u),\quad d_n(q^2u)=q^{4n}(a_n(u)-(1-\energy u+u^2)c_n(u)),
\end{equation}
and we explicitly see the rate at which $b_n(u)$ and $d_n(u)$ converge to zero functions. Furthermore, by taking into account the limits
\begin{equation}
\lim_{n\to\infty}a_n(u)=\bchi_q(q^{-2}u),\quad \lim_{n\to\infty}c_n(u)=\bchi_q(u),
\end{equation}
we obtain from \eqref{bndn}
\begin{equation}
\lim_{n\to\infty}q^{-4n}b_n(u)=-q^{-2}u^2\bchi_q(q^{-2}u),\quad \lim_{n\to\infty}q^{-4n}d_n(u)=-q^{-2}u^2\bchi_q(u),
\end{equation}
where in the last limit we have used the equation~\eqref{chi-equation} for  $\bchi_q(u)$.
\begin{theorem}[Uniqueness of the regular solution]\label{th1}
In the case $|q|<1$, let $f(u)$ be a solution of the functional equation \eqref{chi-equation} 
which is regular (i.e. holomorphic) at $u=0$  and normalised so that $f(0)=1$. Then $f(u)=\bchi_q(u)$.
\end{theorem}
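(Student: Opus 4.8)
The plan is to exploit the regularity hypothesis by passing to Taylor coefficients at $u=0$ and showing that the functional equation~\eqref{chi-equation}, together with the normalisation, determines every coefficient uniquely. Since $f$ is holomorphic at the origin, I would start from its Taylor expansion $f(u)=\sum_{k\ge 0}f_k u^k$, convergent on some disk $|u|<\rho$, with $f_0=f(0)=1$, and substitute it into the equation.

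Using $f(u/q^2)=\sum_k f_k q^{-2k}u^k$ and $q^2u^2 f(q^2u)=\sum_k f_k q^{2k+2}u^{k+2}$, and comparing the coefficient of $u^k$ on both sides of~\eqref{chi-equation}, I obtain for every $k\ge 0$ the relation
\begin{equation*}
f_k q^{-2k}+f_{k-2}\,q^{2k-2}=f_k-\energy f_{k-1}+f_{k-2},
\end{equation*}
with the convention $f_{-1}=f_{-2}=0$. For $k=0$ this is the tautology $f_0=f_0$, which leaves $f_0$ free and is fixed by the normalisation; for $k\ge 1$ it rearranges to the recursion
\begin{equation*}
(q^{-2k}-1)\,f_k=-\energy f_{k-1}+(1-q^{2k-2})f_{k-2}.
\end{equation*}

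The decisive observation is that the coefficient $q^{-2k}-1$ on the left never vanishes for $k\ge 1$: since $|q|<1$ we have $|q^{-2k}|=|q|^{-2k}>1$, so $q^{-2k}\ne 1$. Hence the recursion solves uniquely for $f_k$ in terms of $f_{k-1}$ and $f_{k-2}$, and all coefficients are determined recursively by the single datum $f_0=1$. Because $\bchi_q(u)$ is itself a solution of~\eqref{chi-equation} that is regular at $u=0$ with $\bchi_q(0)=1$, its Taylor coefficients obey the same recursion with the same initial value; therefore $f$ and $\bchi_q$ have identical Taylor series at the origin, so they coincide on a neighbourhood of $0$, and since $\bchi_q$ is entire, $f=\bchi_q$ on the whole domain of $f$ by analytic continuation.

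The argument is essentially bookkeeping once the recursion is in hand; the only genuine point, and the step I would flag, is the non-vanishing of $q^{-2k}-1$. This is exactly where the hypothesis $|q|<1$ (more generally, that $q^2$ is not a root of unity) enters: were some $q^{2k}=1$ with $k\ge 1$, the corresponding coefficient $f_k$ would be unconstrained and the regular solution would fail to be unique.
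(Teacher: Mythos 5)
Your proof is correct, but it takes a genuinely different route from the paper's. You pass to Taylor coefficients at $u=0$: substituting $f(u)=\sum_{k\ge0}f_ku^k$ into \eqref{chi-equation} gives $(q^{-2k}-1)f_k=-\energy f_{k-1}+(1-q^{2k-2})f_{k-2}$, and since $|q|<1$ forces $q^{-2k}\ne1$ for $k\ge1$, the datum $f_0=1$ determines every coefficient; this is exactly the recursion (equivalent, after multiplying by $(q^{-2};q^{-2})_k$, to \eqref{chin-recursion}) that the paper derives \emph{later} to define the orthogonal polynomials $\chi_{q,n}(\energy)$, but which it does not use for uniqueness. The paper instead takes $n\to\infty$ in the iterated matrix relation \eqref{itmform}: since $q^{2n}u\to0$ and $f$ is regular at $0$ with $f(0)=1$, the right-hand side converges to $M_\infty(u)\left(\begin{smallmatrix}1\\ 1\end{smallmatrix}\right)$, which by the structure \eqref{infpr} equals $\left(\begin{smallmatrix}\bchi_q(q^{-2}u)\\ \bchi_q(u)\end{smallmatrix}\right)$, so $f(u)=\bchi_q(u)$ pointwise in one stroke. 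The trade-offs are instructive: your argument is elementary, self-contained, and isolates precisely where the hypothesis enters (for uniqueness of the formal series one only needs $q^{2k}\ne1$ for all $k\ge1$, i.e.\ $q^2$ not a root of unity, as you flag), but it genuinely requires holomorphy at $0$ and a final appeal to the identity theorem, which tacitly assumes the domain of $f$ is a connected neighbourhood of the origin on which \eqref{chi-equation} holds; the paper's argument presupposes the convergence machinery of Lemma~\ref{lem1}, but in exchange needs only continuity of $f$ at $0$ with $f(0)=1$ (a strictly weaker hypothesis than stated) and yields $f(u)=\bchi_q(u)$ directly at every point where $f$ is defined, with no continuation step.
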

\begin{proof}
 Taking the limit $n\to \infty$ in the matrix equality~\eqref{itmform} and using the regularity and normalisation properties of $f(u)$ at $u=0$, we obtain the equality

\begin{equation}
\left(\begin{array}{cc}
f\left(q^{-2}u\right)\\
f(u)
\end{array}\right) =
M_\infty(u)\left(\begin{array}{cc}
1\\
1
\end{array}\right)=\left(\begin{array}{cc}
\bchi_q\left(q^{-2}u\right)\\
\bchi_q(u)
\end{array}\right).
\end{equation}
\end{proof}
\begin{lemma}\label{lem2}
In the space of solutions of \eqref{chi-equation}, there is an involution which associates to any solution $f(u)$ another solution  $\check f(u)$ defined by
 \begin{equation}
\check f(u)=u^{-1} f\left(u^{-1}\right).
\end{equation}
\end{lemma}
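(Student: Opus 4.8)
The plan is to verify directly the two properties that make $f\mapsto\check f$ an involution on the solution space of \eqref{chi-equation}: that $\check f$ again solves \eqref{chi-equation}, and that applying the operation twice recovers $f$. The involution property is immediate and uses nothing beyond the definition: substituting $\check f(u)=u^{-1}f(u^{-1})$ into itself gives $\check{\check f}(u)=u^{-1}\check f(u^{-1})=u^{-1}\cdot(u^{-1})^{-1}f\bigl((u^{-1})^{-1}\bigr)=f(u)$, since the inverse prefactor cancels. So the only substantive task is to check that $\check f$ satisfies the functional equation whenever $f$ does.

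For this I would substitute $\check f(u)=u^{-1}f(u^{-1})$ into \eqref{chi-equation} and evaluate the shifted arguments: $\check f(u/q^2)=(q^2/u)f(q^2/u)$, and $q^2u^2\check f(q^2u)=u\,f\bigl(1/(q^2u)\bigr)$ after the prefactors cancel, while $\check f(u)=u^{-1}f(u^{-1})$. Collecting terms, the candidate identity for $\check f$ reads $(q^2/u)f(q^2/u)+u\,f\bigl(1/(q^2u)\bigr)=(1-\energy u+u^2)u^{-1}f(u^{-1})$. The key step is then to set $v=u^{-1}$, under which $q^2/u=q^2v$, $1/(q^2u)=v/q^2$, and $u^{-1}=v$; multiplying the resulting relation through by $v$ turns it into $f(v/q^2)+q^2v^2f(q^2v)=(1-\energy v+v^2)f(v)$, which is exactly \eqref{chi-equation} evaluated at $v$. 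Thus the equation for $\check f$ at $u$ is equivalent to the equation for $f$ at $u^{-1}$, and the hypothesis on $f$ delivers the conclusion.

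There is no genuine obstacle here; the argument is a short symmetric computation, and the only point requiring care is the bookkeeping of the $q^2$-powers and prefactors under inversion, in particular that the coefficient $q^2u^2$ in front of the forward-shifted term is transported correctly. Conceptually, the involution reflects the reciprocal symmetry of the potential: dividing $1-\energy u+u^2$ by $u$ yields $u+u^{-1}-\energy$, which is manifestly invariant under $u\mapsto u^{-1}$, and this is what makes the substitution close up into the original equation.
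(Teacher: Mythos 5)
Your proof is correct and takes essentially the same route as the paper, which dismisses the lemma as ``an easy direct verification'': your substitution $v=u^{-1}$ after evaluating the shifted terms, together with the one-line check that the map squares to the identity, is precisely that verification carried out explicitly, and your closing observation that $(1-\varepsilon u+u^2)/u=u+u^{-1}-\varepsilon$ is invariant under $u\mapsto u^{-1}$ correctly identifies why the computation closes up.
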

\begin{proof}
 This is an easy direct verification.
\end{proof}



Applying Lemma~\ref{lem2} to our regular solution $\bchi_q(u,\energy )$, we obtain a second solution of the recursion equation (\ref{chi-equation})
\begin{equation}
\check\bchi_q(u,\energy ):=u^{-1}\bchi_q(u^{-1},\energy )
\end{equation}
which is holomorphic in $\C_{\ne0}$. 
\subsection{Orthogonal polynomials associated to $\bchi_q(u,\energy )$}
The function $\bchi_q(u,\energy )$ can be expanded as a  power series
\begin{equation}
\bchi_q(u,\energy )=\sum_{n\ge0} c_n u^n,\quad c_0=1,
\end{equation}
with infinite radius of convergence. The coefficients here are functions of $\energy $ and $q$.
The functional equation~\eqref{chi-equation} in this case is translated into a system of recurrence equations on the coefficients
\begin{equation}
(1-q^{-2-2n})c_{n+1}=\energy c_n-(1-q^{2n})c_{n-1}, \quad n\in\Z_{\ge0},
\end{equation}
which, upon multiplication by $(q^{-2};q^{-2})_n$, is rewritten as a system of recurrence relations defining orthogonal polynomials $\chi_{q,n}(\energy )\in\C[\energy ]$, $n\in\Z_{n\ge0}$:
\begin{equation}\label{chin-recursion}
\chi_{q,0}(\energy )=1\;,\quad \chi_{q,n+1}(\energy ) = \energy  \chi_{q,n}(\energy ) + (q^n-q^{-n
})^2 \chi_{q,n-1}(\energy ),
\end{equation}
where
\begin{equation}
\chi_{q,n}(\energy ):=(q^{-2};q^{-2})_nc_n.
\end{equation}
Notice that the polynomials $\chi_{q,n}(\energy )$ also depend on the deformation parameter $q$ in symmetric way in the sense that they are unchanged under the replacement $q\mapsto q^{-1}$.
Here are the explicit forms of the first four polynomials
\begin{multline}
\chi_{q,0}(\energy )=1,\quad \chi_{q,1}(\energy )=\energy ,\quad \chi_{q,2}(\energy )=\energy ^2+(q-q^{-1})^2,\\
 \chi_{q,3}(\energy )=\energy (\energy ^2+(q^2-q^{-2})^2+(q-q^{-1})^2),\quad\ldots
\end{multline}

Thus, our solution $\bchi_q(u,\energy )$  is represented in the form of a power series
\begin{equation}\label{chi-series}
\bchi_q(u,\energy )\;:=\;\sum_{n\geq 0} \frac{\chi_{q,n}(\energy )}{(q^{-2};q^{-2})_n} u^n=\sum_{n\geq 0}(-1)^n q^{n(n+1)} \frac{\chi_{q,n}(\energy )}{(q^{2};q^{2})_n} u^n\;.
\end{equation}
In order to see explicitly how this series absolutely  convergences on the entire complex plane of the variable $u$, it suffices to find the growth rate of the polynomials $\chi_{q,n}(\energy )$ at large index $n$. To this end, we observe that the recurrence relation~\eqref{chin-recursion}
at large $n$ approaches  the form
\begin{equation}
\chi_{n+1} \simeq \energy  \chi_n + q^{-2n} \chi_{n-1}
\end{equation}
where we have suppressed the arguments $q$ and  $\energy $ of $\chi_{q,n}(\energy )$ and have taken into account the inequality $|q|<1$.
This equivalently can be rewritten as follows:
\begin{equation}\label{asb}
 \frac{\chi_{n+1}}{\chi_{n}}\frac{\chi_{n}}{\chi_{n-1}}\simeq \energy \frac{\chi_{n}}{\chi_{n-1}}+q^{-2n}\\
\quad \Leftrightarrow\quad\frac{\chi_{n}}{\chi_{n-1}}\simeq q^{-n}\quad \Leftrightarrow\quad \chi_{n}\simeq q^{-n^2/2}.
\end{equation}
The latter relation immediately implies the absolute convergence of the sum in \eqref{chi-series}.
\begin{theorem}
 The orthogonal polynomials $\chi_{q,n}(\energy )$ satisfy the following multiplication rules
 \begin{equation}
\chi_{q,m}(\energy )\chi_{q,n}(\energy )=\sum_{k=0}^{\min(m,n)} \frac{(q^{2m};q^{-2})_k(q^{2n};q^{-2})_k(q^{2(k-m-n)};q^{2})_k}{(q^2;q^2)_k}\chi_{q,m+n-2k}(\energy )
\end{equation}
\end{theorem}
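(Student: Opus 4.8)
The plan is to establish the multiplication rule by induction on $n$ with $m$ arbitrary, using \emph{only} the three-term recurrence~\eqref{chin-recursion}. First I would record the structural fact that, since $\chi_{q,j}(\energy)$ has degree $j$ in $\energy$ with leading coefficient $1$, the family $\{\chi_{q,j}(\energy)\}_{j\ge0}$ is a graded basis of $\C[\energy]$. Consequently the product $\chi_{q,m}(\energy)\chi_{q,n}(\energy)$, a polynomial of degree $m+n$ and of definite parity $m+n$, has a \emph{unique} expansion $\sum_{k\ge0}a_k\,\chi_{q,m+n-2k}(\energy)$, so it suffices to show that the coefficients $a_k$ equal the proposed
\begin{equation}
c^{m,n}_k:=\frac{(q^{2m};q^{-2})_k(q^{2n};q^{-2})_k(q^{2(k-m-n)};q^{2})_k}{(q^2;q^2)_k}.
\end{equation}
Up front I would note two elementary properties: $c^{m,n}_k=c^{n,m}_k$, and $c^{m,n}_k=0$ whenever $k>\min(m,n)$ (then one of $(q^{2m};q^{-2})_k$ or $(q^{2n};q^{-2})_k$ picks up the factor $1-q^0=0$). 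This simultaneously explains the summation range and lets me use $c^{m,n}_k$ with the convention that it vanishes for out-of-range or negative $k$.

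Write $\rho_j:=(q^j-q^{-j})^2$ for the recurrence coefficient. The base case $n=0$ is immediate since $c^{m,0}_0=1$, and $n=1$ reduces, via $\energy\chi_{q,m}=\chi_{q,m+1}-\rho_m\chi_{q,m-1}$, to the single check $c^{m,1}_1=-\rho_m$, which follows from $(1-q^{2m})(1-q^{-2m})=-\rho_m$. For the inductive step I assume the formula for $n-1$ and for $n$ (all $m$) and apply $\chi_{q,n+1}=\energy\chi_{q,n}+\rho_n\chi_{q,n-1}$. Substituting the two expansions and re-expanding $\energy\chi_{q,m+n-2k}=\chi_{q,m+n-2k+1}-\rho_{m+n-2k}\chi_{q,m+n-2k-1}$ back into the basis, then collecting the coefficient of $\chi_{q,m+n+1-2\ell}$, reduces the entire step to the scalar identity
\begin{equation}\label{crux}
c^{m,n+1}_\ell=c^{m,n}_\ell-\rho_{m+n+2-2\ell}\,c^{m,n}_{\ell-1}+\rho_n\,c^{m,n-1}_{\ell-1},\qquad \ell\ge0 .
\end{equation}

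The remaining work is to verify~\eqref{crux} as a rational identity in $q$. Here I would set $\gamma_j:=(q^2;q^2)_j$ and rewrite $(q^{2m};q^{-2})_k=\gamma_m/\gamma_{m-k}$, $(q^{2n};q^{-2})_k=\gamma_n/\gamma_{n-k}$, and, after extracting the powers of $q$ from the negative-exponent factor, $(q^{2(k-m-n)};q^{2})_k=(-1)^k q^{-2\Sigma}\gamma_{m+n-k}/\gamma_{m+n-2k}$ with $\Sigma=\sum_{\mu=m+n-2k+1}^{m+n-k}\mu$, while $\rho_j=-(1-q^{2j})(1-q^{-2j})$ puts the recurrence coefficients into the same language. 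After clearing the common $\gamma$-ratios, \eqref{crux} becomes a three-term $q$-polynomial identity that I expect to collapse to a $q$-Pascal/contiguous relation. The main obstacle is precisely this bookkeeping: the factor $(q^{2(k-m-n)};q^{2})_k$ carries a sign $(-1)^k$ and a power of $q$ that is \emph{quadratic} in $k$, and the three right-hand terms live at mismatched indices ($\ell$ versus $\ell-1$, and $n$ versus $n-1$), so the cancellation of these $q$-powers is not transparent and must be tracked with care. As an independent cross-check I would observe that, up to an affine change of variable and a normalization, the $\chi_{q,n}(\energy)$ are continuous $q^{-1}$-Hermite polynomials (consistent with the entirety of $\bchi_q$ and the $q^{n(n+1)}$ factor in~\eqref{chi-series}), for which the stated rule is the classical Rogers-type linearization formula; matching that known result confirms the $q$-identity behind~\eqref{crux}.
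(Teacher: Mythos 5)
Your proposal is correct and is essentially the paper's own argument: the paper's entire proof reads ``This is a direct check by recurrence on $\max(m,n)$,'' which is exactly your induction via the three-term recurrence~\eqref{chin-recursion}, with your reduction to the coefficient identity~\eqref{crux} (whose base cases and structure you verify correctly, e.g.\ $c^{m,1}_1=(1-q^{2m})(1-q^{-2m})=-\rho_m$) simply spelling out what the paper leaves implicit. The only part you leave as ``expected'' --- the final three-term $q$-identity behind~\eqref{crux} --- does hold (after the $\gamma$-rewriting it reduces to a polynomial identity in $q^2$, checkable directly), so your write-up is in fact more complete than the paper's one-line proof.
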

\begin{proof}
 This is a direct check by recurrence  on $\max(m,n)$.
\end{proof}

\subsection{Non-linear first order functional difference equations}
The vector space of solutions of  \eqref{chi-equation} is a module over the algebra of $q^2$-periodic functions, i.e. functions $g(u)$ that satisfy the functional difference equation $g(q^2u)=g(u)$.
For any nontrivial solution $f(u)$ of \eqref{chi-equation}, the combinations
\begin{equation}
R_f(u):=\frac{f(q^{-2} u)}{f(u)}
\end{equation}
and 
\begin{equation}
P_f(u):=\frac{q^2u^2f(q^{2} u)}{f(u)}
\end{equation}
are invariant with respect to multiplication by  $q^2$-periodic functions, so that they can be used for characterisation of the equivalence classes of solutions of \eqref{chi-equation} differing by $q^2$-periodic functions. The two combinations are related to each other  through the functional equalities
\begin{equation}
P_f(u)R_f(q^2u)=q^2u^2,\quad P_f(u)+R_f(u)=1-\energy u+u^2.
\end{equation}

The second order functional  linear difference equation~\eqref{chi-equation} implies  first order non-linear functional difference equations for $R(u)=R_f(u)$
\begin{equation}\label{R-equation}
R(u)R\left(q^2u\right)=(1-\energy u+u^2)R\left(q^2u\right)-q^2u^2
\end{equation}
and for $P(u)=P_f(u)$
\begin{equation}\label{P-equation}
P(u)P\left(q^{-2}u\right)=(1-\energy u+u^2)P\left(q^{-2}u\right)-q^{-2}u^2.
\end{equation}
Notice that two equations are related to each other by the symmetry $q\mapsto q^{-1}$.

By using  \eqref{itmform} and \eqref{exmn}  
 we easily obtain the following linear fractional transformation formulae
 \begin{equation}\label{fltf}
R(q^{2n} u)=\frac{d_n(u)R(u)-b_n(u)}{a_n(u)-c_n(u)R(u)},\quad \forall n\in\Z_{\ge0}.
\end{equation}
\begin{theorem}\label{thm3}
Let  $ R,P\colon U\to \C$, $U\subset \C$,  be solutions of \eqref{R-equation}  and \eqref{P-equation}, respectively, and $z\in U$ be such that $\bchi_q(z)\ne0$.
Then the sequences $(R(q^{2n}z))_{n\ge0}$ and $(P(q^{2n}z))_{n\ge0}$ converge with the limits
\begin{equation}\label{lim1}
\lim_{n\to\infty}R(q^{2n}z)=\left\{
\begin{array}{cc}
  0&\text{if}\ R(z)\ne R_{\bchi_q}(z) ; \\
  1&\text{otherwise.}
\end{array}
\right.
\end{equation}
and 
\begin{equation}\label{lim2}
\lim_{n\to\infty}P(q^{2n}z)=\left\{
\begin{array}{cc}
  1&\text{if}\ P(z)\ne P_{\bchi_q}(z) ; \\
  0&\text{otherwise.}
\end{array}
\right.
\end{equation}
\end{theorem}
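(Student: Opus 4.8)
The plan is to read off both limits directly from the linear fractional representation \eqref{fltf}, whose crucial feature is that its right-hand side depends on the chosen solution $R$ \emph{only through the single number} $R(z)$: the coefficients $a_n(z),b_n(z),c_n(z),d_n(z)$ are the matrix elements of $M_n(z)$ and are therefore fixed by $z$ alone, independently of which solution $R$ we plug in. Writing $w:=R(z)$, I would record
\[
R(q^{2n}z)=\frac{d_n(z)\,w-b_n(z)}{a_n(z)-c_n(z)\,w}
\]
and pass to the limit $n\to\infty$ using the asymptotics already established in this section, namely $a_n(z)\to\bchi_q(q^{-2}z)$, $c_n(z)\to\bchi_q(z)$, and $b_n(z),d_n(z)\to0$. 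The denominator then tends to $\bchi_q(q^{-2}z)-\bchi_q(z)\,w=\bchi_q(z)\bigl(R_{\bchi_q}(z)-w\bigr)$.

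First I would dispose of the generic case $w\neq R_{\bchi_q}(z)$. Here the denominator converges to the nonzero number $\bchi_q(z)\bigl(R_{\bchi_q}(z)-w\bigr)$, where nonvanishing uses the hypothesis $\bchi_q(z)\neq0$, while the numerator $d_n(z)w-b_n(z)$ tends to $0$; hence $R(q^{2n}z)\to0$, which is the first alternative in \eqref{lim1}.

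The delicate case is $w=R_{\bchi_q}(z)$, where numerator and denominator both vanish in the limit, so a naive estimate yields only the indeterminate form $0/0$; I expect this to be the main obstacle. The key idea is to avoid any refined asymptotic expansion by exploiting the value-dependence noted above. Since $R_{\bchi_q}$ is itself a solution of \eqref{R-equation}, formula \eqref{fltf} applies verbatim to it with the very same coefficients $a_n(z),\dots,d_n(z)$; as these do not distinguish $R$ from $R_{\bchi_q}$, the equality $R(z)=R_{\bchi_q}(z)$ forces
\[
R(q^{2n}z)=R_{\bchi_q}(q^{2n}z)=\frac{\bchi_q(q^{2n-2}z)}{\bchi_q(q^{2n}z)}
\]
for all $n$ large enough that $\bchi_q(q^{2n}z)\neq0$, which holds eventually since $q^{2n}z\to0$ and $\bchi_q(0)=1$. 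Both arguments tend to $0$, so the ratio tends to $\bchi_q(0)/\bchi_q(0)=1$, giving the second alternative in \eqref{lim1}.

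Finally, I would deduce \eqref{lim2} from \eqref{lim1} rather than repeat the argument for $P$. A direct substitution shows that $P$ solves \eqref{P-equation} if and only if $R:=1-\energy u+u^2-P$ solves \eqref{R-equation}: using $R(u)-(1-\energy u+u^2)=-P(u)$, equation \eqref{R-equation} is equivalent to $R(q^2u)P(u)=q^2u^2$, and inserting $R(q^2u)=1-\energy q^2u+q^4u^2-P(q^2u)$ reproduces \eqref{P-equation} after the shift $u\mapsto q^2u$. Because $R_{\bchi_q}+P_{\bchi_q}=1-\energy u+u^2$, one has $R(z)=R_{\bchi_q}(z)$ exactly when $P(z)=P_{\bchi_q}(z)$. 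Since $1-\energy q^{2n}z+q^{4n}z^2\to1$, the relation $P(q^{2n}z)=\bigl(1-\energy q^{2n}z+q^{4n}z^2\bigr)-R(q^{2n}z)$ converts the two alternatives of \eqref{lim1} into those of \eqref{lim2}, the value $0$ of $R$ corresponding to the value $1$ of $P$ and vice versa.
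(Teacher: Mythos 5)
Your proposal is correct and follows essentially the same route as the paper's own proof: the unequal case via the limits $a_n(z)\to\bchi_q(q^{-2}z)$, $c_n(z)\to\bchi_q(z)$, $b_n(z),d_n(z)\to0$ in \eqref{fltf}, the equal case by noting that \eqref{fltf} forces $R(q^{2n}z)=R_{\bchi_q}(q^{2n}z)\to R_{\bchi_q}(0)=1$, and \eqref{lim2} reduced to \eqref{lim1} through $R(u):=1-\energy u+u^2-P(u)$. Your added details (the explicit verification of that substitution and the remark that $\bchi_q(q^{2n}z)\ne0$ eventually) only make explicit what the paper leaves implicit.
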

\begin{proof} Formula~\eqref{lim2} follows from \eqref{lim1}  if we define  $R(u):=1-\energy u+u^2-P(u)$.

If $R(z)=R_{\bchi_q}(z)$, then formula~\eqref{fltf} with $u=z$ implies that $R(q^{2n}z)=R_{\bchi_q}(q^{2n}z)$
for all $n\in\Z_{\ge0}$. 
Thus,
\begin{equation}
\lim_{n\to\infty}R(q^{2n}z)=\lim_{n\to\infty}R_{\bchi_q}(q^{2n}z)=R_{\bchi_q}(0)=1
\end{equation}
by the regularity and the normalisation properties of $\bchi_q(u)$ at $u=0$.

If we assume that $R(z)\ne R_{\bchi_q}(z)$, then in equality~\eqref{fltf} with $u=z$ the denominator is distinct from zero for sufficiently large $n$ since it converges to a nonzero number:
\begin{equation}
\lim_{n\to\infty}(a_n(z)-c_n(z)R(z))=a_\infty(z)-c_\infty(z)R(z)
=\bchi_q(q^{-2}z)-\bchi_q(z)R(z)=\bchi_q(z)(R_{\bchi_q}(z)-R(z)),
\end{equation}
while the numerator converges to $b_\infty(z)R(z)-d_\infty(z)=0$. 
\end{proof}
\subsection{Wronskian of two solutions of the main functional equation}
An important function associated  with any two solutions of the functional equation~\eqref{chi-equation} is their Wronskian or, more precisely, the functional difference analogue of the Wronskian in the theory of linear second order ordinary differentian equations. As we will see below, any Wronskian satisfies a particularly simple difference equation of the form
\begin{equation}\label{difeqwron}
W(q^2u)\;=\;\frac{1}{q^2u^2} W(u)\quad \Leftrightarrow\quad W\left(u/q\right)=u^2W(qu).
\end{equation}

\begin{definition-proposition}
 Let $f(u)$ and $g(u)$ satisfy equation~\eqref{chi-equation}. Then their \emph{Wronskian} $[f,g](u)$ defined by
 \begin{equation}
[f,g](u):=f\left(q^{-2}u\right)g(u)-g\left(q^{-2}u\right)f(u)
\end{equation}
satisfies equation~\eqref{difeqwron}.
\end{definition-proposition}
\begin{proof} Substructing \eqref{chi-equation} for $g(u)$ multiplied by $f(u)$ from \eqref{chi-equation} for $f(u)$ multiplied by $g(u)$, we obtain 
 \begin{multline}
\left(f\left(q^{-2}u\right) + q^2 u^2 f(q^2 u) \right)g(u)=\left(g\left(q^{-2}u\right) + q^2 u^2 g(q^2 u) \right)f(u)\\
\Leftrightarrow
f\left(q^{-2}u\right)g(u) -g\left(q^{-2}u\right)f(u)= q^2 u^2 \left(g(q^2 u)f(u)-f(q^2 u)g(u)\right)\\
\Leftrightarrow
[f,g](u)= q^2 u^2[f,g](q^2u).
\end{multline}
\end{proof}
Of particular importance for the sequel will be the Wronskian of $\bchi$ and $\check\bchi$:
\begin{equation}\label{wron}
[\bchi,\check\bchi](u)\;=\;\bchi\left(q^{-2}u\right)\check\bchi(u)-\check\bchi\left(q^{-2}u\right)\bchi(u)\;.
\end{equation}
Here and hereafter we shorten the notation $\bchi(u)=\bchi_q(u,\energy )$ and so on. 

\begin{lemma}
 The Wronskian $[\bchi,\check\bchi]$ is not the identically zero function.
\end{lemma}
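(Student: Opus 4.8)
The plan is to argue by contradiction, exploiting the fact that the vanishing of the Wronskian forces $\bchi$ and $\check\bchi$ to be proportional over the ring of $q^2$-periodic functions. Suppose $[\bchi,\check\bchi]\equiv 0$. Since $\bchi$ is entire with $\bchi(0)=1$, it is not identically zero, so the ratio $g(u):=\check\bchi(u)/\bchi(u)$ is a well-defined meromorphic function on $\C_{\ne0}$. Dividing the defining relation \eqref{wron} by $\bchi(q^{-2}u)\bchi(u)$ on the dense open set where this product is nonzero yields $g(u)=g(q^{-2}u)$, and by the identity theorem for meromorphic functions this persists on all of $\C_{\ne0}$; thus $g$ would be $q^2$-periodic. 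I will derive a contradiction from the local behaviour of $g$ at the origin.

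First I would record that $\bchi(u,\energy)$ is never a polynomial in $u$. Indeed, the recurrence $(1-q^{-2-2n})c_{n+1}=\energy c_n-(1-q^{2n})c_{n-1}$ for its Taylor coefficients shows that no two consecutive coefficients can vanish: if $c_n=c_{n+1}=0$ for some $n\ge1$, then, since $1-q^{2n}\ne0$ when $|q|<1$, the relation forces $c_{n-1}=0$, and descending inductively one reaches $c_0=0$, contradicting $c_0=1$ (the case $n=0$ already gives $c_0=0$ directly). Hence infinitely many $c_n$ are nonzero. Consequently $\check\bchi(u)=u^{-1}\bchi(u^{-1})=\sum_{m\ge1}c_{m-1}u^{-m}$ has infinitely many nonzero terms of negative degree, i.e. an isolated essential singularity at $u=0$. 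Because $\bchi(0)=1\ne0$, the function $\bchi$ is holomorphic and nonvanishing on a punctured disk around the origin, so $g=\check\bchi/\bchi$ inherits an isolated essential singularity at $u=0$.

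Finally I would contradict the $q^2$-periodicity of $g$. Choose $\rho>0$ small enough that $g$ is holomorphic on the punctured disk $0<|u|\le\rho$, and consider the compact fundamental annulus $A:=\{\,|q|^2\rho\le|u|\le\rho\,\}$ for the scaling $u\mapsto q^2u$. Every $u$ with $0<|u|\le\rho$ can be written as $u=q^{2n}u_0$ with $n\ge0$ and $u_0\in A$, whence $g(u)=g(u_0)$ by periodicity; since $g$ is bounded on the compact set $A$, it is therefore bounded on the entire punctured disk. By Riemann's removable singularity theorem the singularity at $0$ would then be removable, contradicting that it is essential. This contradiction shows $[\bchi,\check\bchi]\not\equiv0$. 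The main point to get right is this last step, the incompatibility of $q^2$-periodicity with an isolated essential singularity at the fixed point $0$ of the scaling, together with the verification that $\bchi$ genuinely has infinitely many nonzero Taylor coefficients; the remaining manipulations are routine.
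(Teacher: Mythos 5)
Your proof is correct, and it takes a genuinely different route from the paper's. The paper's argument is a one-step explicit computation: expanding $[\bchi,\check\bchi](u)$ as a Laurent series in $u$, it evaluates the residue in closed form, $[\bchi,\check\bchi]_{-1}=\sum_{m\ge0}\bigl(\chi_{q,m}(\energy)/(q^{-2};q^{-2})_m\bigr)^2\,(q^{-2m}-q^{2m+2})$, notes that every term is non-negative when $\energy$ and $q$ are real with $0<q<1$, and bounds the sum below by its first term $1-q^2>0$. Your argument is instead a soft-analytic contradiction: from $[\bchi,\check\bchi]\equiv0$ you deduce that $g=\check\bchi/\bchi$ satisfies $g(q^2u)=g(u)$; the three-term recurrence for the coefficients $c_n$ shows (via the descent using $1-q^{2n}\ne0$ for $n\ge1$) that no two consecutive coefficients vanish, so $\bchi$ is not a polynomial, $\check\bchi(u)=\sum_{m\ge0}c_mu^{-m-1}$ has an essential singularity at $u=0$, and so does $g$ since $\bchi(0)=1$; finally, $q^2$-invariance forces $g$ to be bounded on a punctured disk (its values being repeated from the compact fundamental annulus $|q|^2\rho\le|u|\le\rho$), so the singularity would be removable by Riemann's theorem --- a contradiction. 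Every step checks out, including the identity-theorem extension of the periodicity relation and the reduction modulo the scaling $u\mapsto q^2u$. What your route buys is uniformity in the parameters: it proves non-vanishing for every complex $\energy$ and every $0<|q|<1$, whereas the paper's positivity argument, as written, applies only on the real locus, and extending it to the complexified strong-coupling regime would need a further argument (note that vanishing of the residue alone would not even force $W\equiv0$, since the functional equation $W(q^2u)=W(u)/(q^2u^2)$ ties $w_{-1}$ only to the odd Laurent coefficients). What the paper's computation buys, conversely, is quantitative information --- a closed formula and an explicit lower bound for the residue --- which your contradiction argument does not provide.
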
 
\begin{proof}
 Using the power series expansions, we obtain a Laurent series
 \begin{equation}
[\bchi,\check\bchi](u)=\sum_{n\in\Z}[\bchi,\check\bchi]_n u^n,
\end{equation}
where the residue $[\bchi,\check\bchi]_{-1}$ is given explicitly by the absolutely convergent series
\begin{equation}
[\bchi,\check\bchi]_{-1}=\sum_{m\ge0}\left(\frac{\chi_{q,m}(\energy )}{(q^{-2};q^{-2})_m}\right)^2(q^{-2m}-q^{2m+2})
\end{equation}
where each term is non-negative if the variables $\energy $ and $q$ are real with $0<q<1$. Thus, dropping all but the very first term,
 we obtain 
\begin{equation}
[\bchi,\check\bchi]_{-1}\ge 1-q^2>0.
\end{equation}
\end{proof}
\subsection{Parameterization in terms of $\theta$-functions}
Consider Jacobi's $\theta$-function
\begin{equation}
\theta_1(u,q)=\frac{1}{\ii} \sum_{n\in\mathbb{Z}} (-1)^n q^{(n+1/2)^2} u^{n+1/2}\;.
\end{equation}
It has the properties
\begin{equation}\label{thetaprop}
\theta_1(1,q)=0,\quad 
\theta_1(u^{-1},q)= -\theta_1(u,q),\quad \theta_1(q^2u,q)=-\frac{1}{qu}\theta_1(u,q),
\end{equation}
and
\begin{equation}\label{thetamodprop}
\overline{\theta_1(u,q)}=\bb \EXP^{\pi\ii/4-\ii\pi x^2} \theta_1(u,q).
\end{equation}

Any solution $W(u)$ of  \eqref{difeqwron}  in the form of a Laurent series in $u$, can be brought to the form
\begin{equation}
W(u) =\varrho \theta_1(su,q)\theta_1(s^{-1}u,q).
\end{equation}
for some constants $s$ and  $\varrho 
$. The properties of the $\theta$-function given by \eqref{thetaprop} and \eqref{thetamodprop} imply that
\begin{equation}\label{modularW}
\overline{W(u)}=\ii\bb^2 \bar\varrho\varrho^{-1}\EXP^{-2\pi\ii(\sigma^2+x^2)}W(u),\quad u=\EXP^{2\pi\bb x},\quad s=\EXP^{2\pi\bb \sigma}.
\end{equation}

In the case when $W=[\bchi,\check\bchi]$, the  variables $s$ and $\varrho$ will be determined in terms of $\energy $ and $q$:
\begin{equation}
s=s(\energy ,q),\quad \varrho=\varrho(\energy ,q).
\end{equation}
The first of these dependences is of particular importance for us, and it will be the subject of a numerical study.
\subsection{The main functional equation with $q$ replaced by $q^{-1}$}
We turn now to the discussion of the functional equation~(\ref{chi-equation}) with $q$ replaced by $q^{-1}$:
\begin{equation}\label{chi-equation-2}
f(q^2u) + \frac{u^2}{q^2} f\left(\frac{u}{q^2}\right) = (1-\energy u+u^2) f(u)\;.
\end{equation}
Let $\bchi_{q^{-1}}(u,\energy )$ be the normalised solution \eqref{chi-series} with $q$ replaced by $q^{-1}$, i.e. a power series of the form
\begin{equation}\label{anti-chi-series}
\bchi_{q^{-1}}(u,\energy )\;\simeq\;\sum_{n\geq 0} \frac{\chi_{q,n}(\energy )}{(q^{2};q^{2})_n} u^n
\end{equation}
where we use the semi-equality symbol $\simeq$ to indicate the fact that the series  is now only a formal power series as it has zero radius of convergence. This is an immediate consequence of the asymptotic behaviour of $\chi_{q,n}(\energy )$ at large $n$ given by \eqref{asb}. Thus, there does not exist a regular at $u=0$ solution of equation~\eqref{chi-equation-2}. Nonetheless, the formal power series solution~\eqref{anti-chi-series} is interpreted as the asymptotic expansion of a true solution which is not analytic  at $u=0$. Before formulating this result we do some preparatory work.
\begin{lemma}\label{l3}
 Let $f(u)$ and $W(u)$ satisfy the functional equations~\eqref{chi-equation} and \eqref{difeqwron} respectively. Then the ratio
$
f(u)/W(u)
$
satisfies \eqref{chi-equation-2}.
\end{lemma}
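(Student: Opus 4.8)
The plan is to verify \eqref{chi-equation-2} for the ratio $g(u):=f(u)/W(u)$ by a single direct substitution, using the shift relation \eqref{difeqwron} to eliminate the translated values of $W$ and then invoking \eqref{chi-equation} for $f$. The argument is entirely elementary, so there is no genuine obstacle; the only point that deserves attention is \emph{how} the coefficient $q^2u^2$ occurring in \eqref{chi-equation} gets converted into the coefficient $u^2/q^2$ that appears in \eqref{chi-equation-2}. This conversion is precisely what the functional equation \eqref{difeqwron} for $W$ supplies, and recognizing this is really the whole content of the lemma.

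First I would record the two consequences of \eqref{difeqwron} that are needed. From $W(q^2u)=\tfrac{1}{q^2u^2}W(u)$ one obtains $1/W(q^2u)=q^2u^2/W(u)$, and replacing $u$ by $u/q^2$ in the same relation yields $W(u/q^2)=\tfrac{u^2}{q^2}W(u)$, hence $1/W(u/q^2)=\tfrac{q^2}{u^2W(u)}$. These two identities encode the shift behaviour of the reciprocal $1/W$ that will clear exactly the right powers of $q$ and $u$.

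Next I would form the left-hand side of \eqref{chi-equation-2} evaluated at $g$ and simplify with the relations just recorded:
\begin{equation}
g(q^2u)+\frac{u^2}{q^2}\,g\left(\frac{u}{q^2}\right)
=\frac{q^2u^2f(q^2u)}{W(u)}+\frac{f\left(u/q^2\right)}{W(u)}
=\frac{1}{W(u)}\left(f\left(\frac{u}{q^2}\right)+q^2u^2f(q^2u)\right).
\end{equation}
The parenthesis is exactly the left-hand side of \eqref{chi-equation}, so it equals $(1-\energy u+u^2)f(u)$; dividing through by $W(u)$ gives $(1-\energy u+u^2)g(u)$, which is the right-hand side of \eqref{chi-equation-2}. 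This completes the verification, the matching of the coefficients $q^2u^2$ and $u^2/q^2$ being automatic once the two shift identities for $1/W$ are in place.
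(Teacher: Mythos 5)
Your proof is correct and is exactly the direct verification the paper invokes (the paper's proof just says ``This is a direct verification''). The two shift identities you record for $1/W$ are the right bookkeeping, and the computation matching the coefficients $q^2u^2$ and $u^2/q^2$ goes through as you wrote it.
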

\begin{proof}
 This is a direct verification.
\end{proof}
\begin{lemma}
 Let $f(u)$ and $g(u)$ satisfy the functional equations~\eqref{chi-equation} and \eqref{chi-equation-2} respectively. Then the combination
 \begin{equation}
\langle f,g\rangle (u):=f\left(\frac{u}{q^2}\right)g(u)  - \frac{u^2}{q^2} g\left(\frac{u}{q^2}\right)f(u)
\end{equation}
is $q^2$-periodic, i.e.
\begin{equation}\label{constrel}
\langle f,g\rangle (u)=\langle f,g\rangle (q^{2}u),\quad \forall u.
\end{equation}
\end{lemma}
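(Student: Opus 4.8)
The plan is to establish the $q^2$-periodicity by a direct algebraic verification, grouping the four terms of $\langle f,g\rangle$ so that the two dual functional difference equations \eqref{chi-equation} and \eqref{chi-equation-2} can each be invoked exactly once. This mirrors the computation in the Definition-Proposition above, the only new feature being that $f$ and $g$ now solve the two mutually dual equations (the one with $q$ and the one with $q^{-1}$) rather than a single common equation.

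First I would write out the shifted combination explicitly. Substituting $u \mapsto q^2 u$ in the definition and simplifying $(q^2 u)^2/q^2 = q^2 u^2$ and $q^2 u/q^2 = u$ gives
\[
\langle f,g\rangle(q^2 u) = f(u)\, g(q^2 u) - q^2 u^2\, g(u)\, f(q^2 u).
\]

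Next I would recast the target identity $\langle f,g\rangle(u) = \langle f,g\rangle(q^2 u)$ by collecting all terms carrying the factor $g(u)$ on one side and all terms carrying $f(u)$ on the other. A short rearrangement turns the claim into
\[
g(u)\Bigl(f\!\left(u/q^2\right) + q^2 u^2 f(q^2 u)\Bigr) = f(u)\Bigl(g(q^2 u) + \tfrac{u^2}{q^2}\, g\!\left(u/q^2\right)\Bigr).
\]
The parenthesis on the left is precisely the left-hand side of \eqref{chi-equation} for $f$, so it equals $(1-\energy u+u^2) f(u)$; the parenthesis on the right is the left-hand side of \eqref{chi-equation-2} for $g$, so it equals $(1-\energy u+u^2) g(u)$. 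Hence both sides collapse to the common expression $(1-\energy u+u^2)\, f(u)\, g(u)$, which proves the identity.

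I do not expect a genuine obstacle: the entire argument is a finite algebraic identity holding for every $u$ in the common domain of $f$ and $g$, so no convergence or analyticity input is needed. The only place demanding attention is the bookkeeping of the powers of $q$ under the shift $u \mapsto q^2 u$, together with the observation that the cross-terms must regroup so that one of the two defining equations supplies each side — it is exactly this pairing of the $q$-equation with the $q^{-1}$-equation that makes the common factor $1-\energy u+u^2$ emerge symmetrically and yields the periodicity.
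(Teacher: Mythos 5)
Your proof is correct and is essentially the paper's own argument run in reverse: the paper multiplies \eqref{chi-equation} by $g(u)$ and \eqref{chi-equation-2} by $f(u)$ and subtracts to arrive at exactly the rearranged identity $g(u)\bigl(f(u/q^2)+q^2u^2f(q^2u)\bigr)=f(u)\bigl(g(q^2u)+\tfrac{u^2}{q^2}g(u/q^2)\bigr)$ that you reduce the claim to. Your bookkeeping of the powers of $q$ under $u\mapsto q^2u$ and the pairing of the two dual equations match the paper's computation step for step.
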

\begin{proof} Substructing \eqref{chi-equation-2} multiplied by $f(u)$ from \eqref{chi-equation} multiplied by $g(u)$ , we obtain 
 \begin{multline}
\left(f\left(\frac{u}{q^2}\right) + q^2 u^2 f(q^2 u) \right)g(u)=\left(g(q^2u) + \frac{u^2}{q^2} g\left(\frac{u}{q^2}\right) \right)f(u)\\
\Leftrightarrow
f\left(\frac{u}{q^2}\right)g(u)  - \frac{u^2}{q^2} g\left(\frac{u}{q^2}\right)f(u)=f(u)g(q^2u)- q^2 u^2 g(u)f(q^2 u) 
\end{multline}
where the latter equality is exactly \eqref{constrel}.
\end{proof}
\begin{theorem}
The formal power series~\eqref{anti-chi-series} is the asymptotic expansion  of the  following solution of the functional equation~\eqref{chi-equation-2}
\begin{equation}\label{anti-chi}
\bchi_{q^{-1}}(u,\energy ):=\frac{\check\bchi_q(u,\energy )}{[\bchi_q,\check\bchi_q](u)}\;
\end{equation}
where  the Wronskian $[\bchi_q,\check\bchi_q](u)$ is defined in \eqref{wron}.
\end{theorem}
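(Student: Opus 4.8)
My plan is to split the statement into its two assertions: that $\bchi_{q^{-1}}$ defined by \eqref{anti-chi} solves \eqref{chi-equation-2}, and that its asymptotic expansion at $u=0$ is the formal series \eqref{anti-chi-series}. The first assertion is immediate from the machinery already assembled: by Lemma~\ref{lem2} the function $\check\bchi_q$ solves \eqref{chi-equation}, and by the Definition--Proposition the Wronskian $W:=[\bchi_q,\check\bchi_q]$ solves \eqref{difeqwron}, so Lemma~\ref{l3} applies to the ratio $\check\bchi_q/W$ and shows that $\bchi_{q^{-1}}=\check\bchi_q/W$ solves \eqref{chi-equation-2}. This step is routine.

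For the asymptotic assertion I would first settle the \emph{formal} side. Substituting a formal series $\sum_{n\ge0}a_nu^n$ into \eqref{chi-equation-2} and matching coefficients of $u^n$ yields the two-term recurrence $a_n(q^{2n}-1)=-\energy a_{n-1}+(1-q^{2-2n})a_{n-2}$; using \eqref{chin-recursion} together with the elementary identity $(1-q^{2n-2})(1-q^{2-2n})=-(q^{n-1}-q^{1-n})^2$ one checks that $a_n=\chi_{q,n}(\energy)/(q^2;q^2)_n$ is the unique solution normalised by $a_0=1$. Thus \eqref{anti-chi-series} is the unique formal power-series solution of \eqref{chi-equation-2} with unit constant term, so that any genuine solution admitting a Poincar\'e asymptotic expansion at $u=0$ with leading coefficient $1$ must have exactly \eqref{anti-chi-series} as that expansion.

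The real content is to show that $\bchi_{q^{-1}}$ does admit such an expansion, and here I would exploit the bilinear concomitant $\langle\,\cdot\,,\cdot\,\rangle$. A short computation using $W(u/q^2)=\tfrac{u^2}{q^2}W(u)$ (a restatement of \eqref{difeqwron}) and the definition \eqref{wron} of $W$ gives the clean identity $\langle\bchi_q,\bchi_{q^{-1}}\rangle(u)=1$. Writing $S_N(u):=\sum_{n=0}^{N}a_nu^n$ for the partial sums and $\psi_N:=\bchi_{q^{-1}}-S_N$, the telescoping relation $\langle\bchi_q,S_N\rangle(u)-\langle\bchi_q,S_N\rangle(q^2u)=E_N(u)\bchi_q(u)$, in which $E_N$ is the defect of $S_N$ in \eqref{chi-equation-2} and hence $E_N(u)=\mathcal{O}(u^{N+1})$, forces the analytic function $\langle\bchi_q,S_N\rangle$ to equal $1+\mathcal{O}(u^{N+1})$ (its Taylor coefficients $\phi_m$, $1\le m\le N$, satisfy $\phi_m(1-q^{2m})=0$). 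Subtracting, $\langle\bchi_q,\psi_N\rangle$ is analytic and $\mathcal{O}(u^{N+1})$, i.e. $\psi_N$ solves the first-order $q$-difference equation $\bchi_q(u/q^2)\psi_N(u)-\tfrac{u^2}{q^2}\bchi_q(u)\psi_N(u/q^2)=\mathcal{O}(u^{N+1})$.

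The main obstacle is inverting this first-order equation to conclude $\psi_N(u)=\mathcal{O}(u^{N+1})$. The homogeneous equation is solved by $\bchi_q/W$, so by variation of parameters $\psi_N=(\bchi_q/W)\,Y$ with $Y(u)-Y(u/q^2)=W(u)\langle\bchi_q,\psi_N\rangle(u)/(\bchi_q(u/q^2)\bchi_q(u))$, and one recovers the bound by summing this $q$-difference. The delicate point is that, the series \eqref{anti-chi-series} being divergent, the naive summation towards $u=0$ does not converge; the poles of $\bchi_{q^{-1}}$ (the zeros of $W$, which by the $\theta$-function form $W=\varrho\,\theta_1(su,q)\theta_1(s^{-1}u,q)$ accumulate at the origin) must be avoided, so the asymptotic estimate is genuinely sectorial. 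I would therefore carry out the summation in the convergent direction and prove $\psi_N(u)=\mathcal{O}(u^{N+1})$ uniformly on a sequence of circles $|u|=r_k\to0$ interlacing the rings of zeros of $W$, which is precisely what the claimed asymptotic expansion \eqref{anti-chi-series} asserts.
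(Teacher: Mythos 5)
Your first step is identical to the paper's (Lemma~\ref{l3} applied to $\check\bchi_q/W$), and your formal-series analysis is correct and in fact more explicit than the paper, which merely asserts that \eqref{anti-chi-series} is fixed by \eqref{chi-equation-2} and the condition $\bchi_{q^{-1}}(0)=1$; your verification of $\langle\bchi_q,\bchi_{q^{-1}}\rangle(u)=1$ via $W(u/q^2)=\tfrac{u^2}{q^2}W(u)$ is exactly the paper's identity \eqref{crochet1}, and the telescoping argument giving $\langle\bchi_q,\psi_N\rangle(u)=\mathcal{O}(u^{N+1})$ is sound. The genuine gap is the inversion step, which you name but do not close, and which as outlined would not close. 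The first-order equation determines $\psi_N$ only up to $q^2$-periodic multiples of the homogeneous solution $\bchi_q/W$, and along a geometric sequence $q^{2n}z$ this solution decays like $|q|^{2n^2}$ (iterate $W(q^2u)=W(u)/(q^2u^2)$), i.e. it is asymptotically flat; hence the estimate $\langle\bchi_q,\psi_N\rangle=\mathcal{O}(u^{N+1})$ by itself can never control $\psi_N$ near $u=0$ without an a priori anchor at some fixed order. Your variation-of-parameters summation does not provide that anchor: summing towards $u=0$, the terms $W(q^{2k}z)\epsilon_N(q^{2k}z)/\bigl(\bchi_q(q^{2k-2}z)\bchi_q(q^{2k}z)\bigr)$ grow like $|q|^{-2k^2+2k(N+1)}$ and the sum diverges (as you half-concede), while summing towards $u=\infty$ is borderline because the $\exp\bigl(c(\log r)^2\bigr)$ growth of $W$ cancels that of $\bchi_q(u)\bchi_q(u/q^2)$ at leading order (morally $W\sim\bchi\,\check\bchi$, by $W=\bchi\check\bchi(R_{\bchi}-R_{\check\bchi})$), so convergence hinges on subleading behavior and on avoiding the zeros of $\bchi_q$, none of which your sketch addresses; your closing sentence about uniform bounds on circles interlacing the zero rings of $W$ is a statement of the goal, not an argument, and it is even stronger than what the paper establishes.

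The paper closes precisely this gap by a different mechanism, and it is worth seeing why it works. Besides \eqref{crochet1} it uses the second bracket identity \eqref{crochet2}, $\langle\check\bchi_q,\bchi_{q^{-1}}\rangle(u)=0$, which identifies the ratio $\bchi_{q^{-1}}(q^2u)/\bchi_{q^{-1}}(u)$ with $P_{\check\bchi_q}(u)$. Theorem~\ref{thm3} — the convergence result for the nonlinear first-order equations, already proved via the explicit $M_\infty$ limit and the linear fractional formula \eqref{fltf} — then gives $\lim_{n\to\infty}P_{\check\bchi_q}(q^{2n}z)=1$ whenever $W(z)\neq0$ (since $W(z)\neq0$ forces $R_{\check\bchi}(z)\neq R_{\bchi}(z)$, equivalently $P_{\check\bchi}(z)\neq P_{\bchi}(z)$). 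Feeding this into \eqref{crochet1} pins the zeroth-order anchor $\lim_{n\to\infty}\bchi_{q^{-1}}(q^{2n}z)=1$, after which the expansion is propagated order by order through the functional equation — this is where your (correct) formal-uniqueness computation re-enters. In short: you reproduced one of the paper's two key identities and improved the formal bookkeeping, but you replaced the actual analytic engine (Theorem~\ref{thm3} via \eqref{crochet2}) with an unexecuted summation whose convergence fails in one direction and is delicately borderline in the other; without an analogue of that convergence theorem, your proposal is incomplete at its decisive step.
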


\begin{proof} The fact that $\bchi_{q^{-1}}(u,\energy )$ defined in \eqref{anti-chi} satisfies \eqref{chi-equation-2} follows from Lemma~\ref{l3}. The rest of the proof is based on the following two easily verifiable functional identities
\begin{equation}\label{crochet1}
\langle\bchi_q,\bchi_{q^{-1}}\rangle(u)=1\quad\Leftrightarrow\quad\bchi_q\left(u\right)\bchi_{q^{-1}}(q^2u) - q^2u^2\bchi_{q^{-1}}\left(u\right)\bchi_q(q^2u)\;=\;1
\end{equation}
and 
\begin{equation}\label{crochet2}
\langle\check\bchi_q,\bchi_{q^{-1}}\rangle(u)=0\quad\Leftrightarrow\quad P_{\check\bchi_q}(u)=
\frac{\bchi_{q^{-1}}(q^{2}u)}{\bchi_{q^{-1}}(u)}.
\end{equation}
The Wronskian $[\bchi,\check\bchi](u)$ can equivalently  be rewritten in the form
\begin{equation}
[\bchi,\check\bchi](u)=\bchi(u)\check\bchi(u)(R_{\bchi}(u)-R_{\check\bchi}(u))
\end{equation}
which implies that if $[\bchi,\check\bchi](z)\ne0$ then we have $R_{\bchi}(z)\ne R_{\check\bchi}(z)$, and by Theorem~\ref{thm3} combined with equality~\eqref{crochet2}  we conclude that
\begin{equation}
\lim_{n\to\infty }\frac{\bchi_{q^{-1}}(q^{2+2n}z)}{\bchi_{q^{-1}}(q^{2n}z)}=\lim_{n\to\infty }P_{\check\bchi}(q^{2n}z)=1.
\end{equation}
On the other hand, from equality~\eqref{crochet1} we also obtain a sequence of equalities
\begin{equation}
\frac{1}{\bchi_{q^{-1}}\left(q^{2n}z\right)}=
\frac{\bchi_q\left(q^{2n}z\right)\bchi_{q^{-1}}(q^{2+2n}z)}{\bchi_{q^{-1}}\left(q^{2n}z\right)} - q^{2+4n}z^2\bchi_q(q^{2+2n}z),\quad \forall n\in\Z_{\ge0}
\end{equation}
so that
\begin{equation}
\lim_{n\to\infty }\frac{1}{\bchi_{q^{-1}}\left(q^{2n}z\right)}=1\quad\Leftrightarrow\quad\lim_{n\to\infty }\bchi_{q^{-1}}\left(q^{2n}z\right)=1.
\end{equation}
Now, expanding in a formal power series $\bchi_{q^{-1}}\left(u\right)$ around $u=0$, we recover the series~\eqref{anti-chi-series} as it is fixed uniquely by the functional difference equation~\eqref{chi-equation-2} and the initial condition
$\bchi_{q^{-1}}\left(0\right)=1$.
\end{proof}

\section{Behavior at infinity and Ansatz for the eigenfunction} \label{sec3}

In the limit $x\to -\infty$, equation (\ref{DEs}) is approximated by the equation
\begin{equation}
 \psi(x+\ii\bb)+\psi(x-\ii\bb) = -\EXP^{-2\pi\bb x}\psi(x),
\end{equation}
where, in the left hand side, any one of the two terms  can be dominating giving rise to two equations
\begin{equation}
\psi(x+\epsilon\ii\bb)=-\EXP^{-2\pi\bb x} \psi(x),\quad \epsilon\in\{\pm1\},
\end{equation}
with particular  solutions
\begin{equation}
\psi(x)=\EXP^{\epsilon\ii\pi x^2 + 2\pi\eta x}
\end{equation}
which describe the asymptotic behavior at $x\to -\infty$ of two solutions  of \eqref{DEs} and \eqref{DEsc}
\begin{equation}
\psi_\epsilon(x)= \EXP^{\epsilon\ii\pi x^2 + 2\pi\eta x}\phi_\epsilon(x),\quad \left.\phi_\epsilon (x)\right\vert_{x\to-\infty}=\mathcal{O}(1),\quad \epsilon\in\{\pm1\}.
\end{equation}
Inspection of the equations for $\phi_{\epsilon}(x)$ gives solutions
\begin{equation}
\phi_\epsilon(x)=\bchi_{q^{-\epsilon}}(u)\overline{\bchi_{q^{\epsilon}}(u)},\quad  \epsilon\in\{\pm1\},
\end{equation}
up to multiplication by doubly periodic functions of $x$.
Thus, a general solution for \eqref{DEs} and \eqref{DEsc} exponentially decays at $x\to -\infty$ and is given by the formula
\begin{equation}\label{ansatz-1}
\psi(x) =\EXP^{2\pi\eta x}\sum_{\epsilon\in\{\pm1\}}A_\epsilon\EXP^{\epsilon \ii\pi x^2}\phi_\epsilon(x)
\end{equation}
where, in general, $A_\epsilon$ are doubly periodic functions of $x$. 
As non-constant doubly periodic functions provide unwanted extra poles of the eigenfunction, we assume $A_\epsilon $ to be constants. Moreover, using the modular transformation formula~\eqref{modularW}
 with $W(u)=[\bchi,\check\bchi](u)$, and choosing
 \begin{equation}
A_{1}=\bb^{-1}\varrho   \EXP^{\pi\ii\sigma^2-\xi\pi\ii/4},\quad A_{-1}=\bar{A_1}=\bb\bar\varrho \EXP^{\xi\pi\ii/4-\pi\ii\sigma^2},
\end{equation}
where $\xi^2=1$, we arrive to the final form for the eigenfunction:
\begin{equation}\label{final-ansatz}
\psi(x) = \bb^{-1} \EXP^{\pi\ii\sigma^2-\xi\pi\ii/4}\EXP^{2\pi\eta x + \ii\pi x^2}\frac{ \check\bchi(u)\overline{\bchi(u)}+ \xi \bchi(u)\overline{\check\bchi(u)} }{\theta_1(su,q)\theta_1(s^{-1}u,q)}.
\end{equation}
One can easily verify that
\begin{equation}
\psi(-x)=\xi\psi(x),
\end{equation}
so that the parameter $\xi$ is identified with the parity of the state. We also have the  exponential decay at both infinities
\begin{equation}
|\psi(x)|\sim \EXP^{-2\pi\eta |x|},\quad x\to\pm\infty,
\end{equation}
and the reality property
\begin{equation}
\overline{\psi(x)}=\psi(x).
\end{equation}

\section{The quantization condition}\label{sec4}
In the previous section, we have constructed a solution~\eqref{final-ansatz} for \eqref{DEs} and  \eqref{DEsc} which exponentially decays at plus and minus infinities and has a minimal number of poles. The quantization condition is the analyticity condition for (\ref{final-ansatz}) in the strip 
\begin{equation}
S_\bb:=\left\{z\in\C\ \vert\ |\Im z|<\max(\Re \bb,\Re\bb^{-1})\right\}.
\end{equation}
As a matter of fact, this happens to be equivalent to the absence of poles in the entire complex plane $\mathbb{C}$, at least in the case when all zeros of the wronskian $[\bchi,\check\bchi](u)$ are simple.

Let  us define a function 
\begin{equation}\label{G}
G_q(u,\energy ):=\frac{\bchi_q(u,\energy )}{\check\bchi_q(u,\energy )},
\end{equation}
which has the (anti)symmetry property
\begin{equation}\label{antisym}
G_q(u,\energy )G_q(1/u,\energy )=1,\quad \forall u\in\C.
\end{equation}

\begin{theorem}\label{spectral-theorem} Let $\energy=\energy(\sigma)$ be such that 
\begin{equation}
W(u):=[\bchi_q,\check\bchi_q](u)=\varrho\theta_1(su)\theta_1(s^{-1}u),\quad \forall u\in\C,
\end{equation}
and assume that $s\not\in \pm q^{\Z}$ (recall that $s=s(\sigma)=\EXP^{2\pi\bb\sigma}$). Then the eigenfunction $\psi(x)$  defined by (\ref{final-ansatz}) does not have poles in the strip $S_\bb$ if the variable $\sigma$ is such that
\begin{equation}\label{spectral}
G_q(s,\energy ) = -\xi \overline{G_q(s,\energy )}.
\end{equation}
In that case, the eigenfunction $\psi(x)$ is an entire function on the whole complex plane $\C$. 
\end{theorem}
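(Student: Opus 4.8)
The plan is to locate the poles of $\psi(x)$, reduce their cancellation to a single condition on $G_q$, show that this condition propagates over the whole pole set, and then upgrade analyticity in the strip to analyticity on all of $\C$ via the difference equations. Since the scalar prefactor $\bb^{-1}\EXP^{\pi\ii\sigma^2-\xi\pi\ii/4}$ is a nonzero constant and $\EXP^{2\pi\eta x+\ii\pi x^2}$ is entire and nowhere vanishing, the poles of \eqref{final-ansatz} come solely from the zeros of the denominator $\theta_1(su,q)\theta_1(s^{-1}u,q)$ that are not killed by the numerator $N(x):=\check\bchi(u)\overline{\bchi(u)}+\xi\bchi(u)\overline{\check\bchi(u)}$. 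Using the zeros of $\theta_1$ recorded in \eqref{thetaprop}, the denominator vanishes precisely at $u\in\{sq^{2m},s^{-1}q^{2m}:m\in\Z\}$, that is at $x=\pm\sigma+\ii(m\bb+\ell\bb^{-1})$ with $m,\ell\in\Z$. The hypothesis $s\notin\pm q^{\Z}$ makes the two families $sq^{2\Z}$ and $s^{-1}q^{2\Z}$ disjoint, so each such zero is simple; consequently each potential pole of $\psi$ is at most simple and is removed exactly when $N$ vanishes there.

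First I would reduce the vanishing of $N$ to a statement about $G_q$. Factoring out $\check\bchi(u)\overline{\check\bchi(u)}$ gives
\begin{equation}
N(x)=\check\bchi(u)\,\overline{\check\bchi(u)}\left(\xi G_q(u,\energy)+\overline{G_q(u,\energy)}\right),
\end{equation}
so at a simple zero $u_0$ of the denominator with $\check\bchi(u_0)\overline{\check\bchi(u_0)}\neq0$ (a nondegeneracy secured by the standing hypotheses), the pole is cancelled iff $\xi G_q(u_0,\energy)+\overline{G_q(u_0,\energy)}=0$. The key is that $G_q$ is \emph{constant} along each family. Indeed, $W(u)=[\bchi,\check\bchi](u)$ vanishes at exactly the same points, and writing $W(u)=\bchi(u)\check\bchi(u)(R_{\bchi}(u)-R_{\check\bchi}(u))$ shows that at such a zero $u_0$ one has $R_{\bchi}(u_0)=R_{\check\bchi}(u_0)$, which is equivalent to $G_q(q^{-2}u_0)=G_q(u_0)$. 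Iterating over $u_0=sq^{2m}$ yields $G_q(sq^{2m},\energy)=G_q(s,\energy)$ for all $m$.

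Next I would treat the barred factor by the modular-dual version of the same argument. At $x=\sigma+\ii(m\bb+\ell\bb^{-1})$ one computes $u=sq^{2m}$ and $\bar u=\bar s\,\bar q^{-2\ell}$, and since $\overline{G_q(u,\energy)}=G_{\bar q}(\bar u,\benergy)$, it suffices to know that $G_{\bar q}$ is constant along the dual family $\bar s\,\bar q^{2\Z}$. This follows by applying the zero-of-Wronskian argument to the dual Wronskian, whose factorization is obtained by conjugating $W(u)=\varrho\theta_1(su,q)\theta_1(s^{-1}u,q)$ through \eqref{modularW} and the theta identities \eqref{thetaprop}. Hence $\overline{G_q(u,\energy)}=\overline{G_q(s,\energy)}$ at every pole of the $s$-family, and the cancellation condition collapses to the single equation $\xi G_q(s,\energy)+\overline{G_q(s,\energy)}=0$, which is \eqref{spectral}. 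The mirror family $u=s^{-1}q^{2m}$ gives nothing new: either invoke the parity $\psi(-x)=\xi\psi(x)$, which maps it bijectively onto the $s$-family, or insert the antisymmetry $G_q(s,\energy)G_q(s^{-1},\energy)=1$ from \eqref{antisym} to see that the condition at $s^{-1}$ reduces again to \eqref{spectral}. This establishes that \eqref{spectral} is equivalent to the absence of poles in $S_\bb$.

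Finally, assuming no poles in $S_\bb$, I would promote $\psi$ to an entire function using \eqref{DEs} and \eqref{DEsc}, which hold as identities between meromorphic functions. Solving \eqref{DEs} for the upper shift, $\psi(x+\ii\bb)=(\energy-2\cosh(2\pi\bb x))\psi(x)-\psi(x-\ii\bb)$, expresses $\psi$ just above the strip through the entire coefficient and values of $\psi$ already known to be analytic; since $\Im(\ii\bb)=\Im(\ii\bb^{-1})=\cos\theta$ equals the half-width of $S_\bb$, this (together with the $\bb^{-1}$-analogue and the downward shifts) enlarges the strip of analyticity by $\cos\theta$ at each step and, by induction, covers all of $\C$, so $\psi$ is entire. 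The \textbf{main obstacle} I anticipate is the dual constancy step: one must handle the simultaneous dependence on $u$ (controlled by $q$) and on $\bar u$ (controlled by $\bar q$) at one and the same point $x$, and verify both the nonvanishing of the $\check\bchi$-factors—so the reduction to $G_q$ is faithful—and the precise factorization of the modular-dual Wronskian that makes $G_{\bar q}$ constant along $\bar s\,\bar q^{2\Z}$.
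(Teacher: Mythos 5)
Your core argument coincides with the paper's own proof: locate the simple poles at $u=q^{2n}s^{\pm1}$ (simplicity from $s\notin\pm q^{\Z}$), observe that the numerator vanishes exactly when \eqref{cancellation} holds, and use the vanishing of the Wronskian at those points — via the factorization $W(u)=\bchi(u)\check\bchi(u)\left(R_{\bchi}(u)-R_{\check\bchi}(u)\right)$ — to obtain the constancy \eqref{GG} of $G_q$ along the family, collapsing the infinitely many cancellation conditions to the single equation \eqref{spectral}. You are in fact more explicit than the paper on two points it leaves implicit: the barred factor, where you derive constancy of $G_{\bar q}$ along $\bar s\,\bar q^{2\Z}$ by conjugating the Wronskian through \eqref{modularW}, and the mirror family $u=s^{-1}q^{2m}$, disposed of via \eqref{antisym} or parity. (Your parenthetical that the nonvanishing of $\check\bchi(u_0)\overline{\check\bchi(u_0)}$ is ``secured by the standing hypotheses'' overclaims — $s\notin\pm q^{\Z}$ does not by itself exclude zeros of $\check\bchi$ on the family — but the paper's proof silently makes the same assumption, and in the degenerate case $\bchi(u_0)=\check\bchi(u_0)=0$ the numerator vanishes automatically.)

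The one place you genuinely diverge is the final entirety claim, and there your bootstrap is both unnecessary and subtly broken. The pole candidates sit at $x=\pm\sigma+\ii(m\bb+\ell\bb^{-1})$ with imaginary part $(m+\ell)\cos\theta$, so some of them (e.g.\ $x=\sigma+\ii\bb$) lie exactly on the boundary lines $\Im z=\pm\cos\theta$ of $S_\bb$. Since $\ii\bb-\ii\bb^{-1}=-2\sin\theta$ is real and $\ii\bb+\ii\bb^{-1}=2\ii\cos\theta$, the recursion $\psi(x+\ii\bb)=(\energy-2\cosh(2\pi\bb x))\psi(x)-\psi(x-\ii\bb)$ (and its $\bb^{-1}$ analogue) expresses regularity at a boundary candidate in terms of regularity at another boundary candidate, so analyticity in the \emph{open} strip alone never gets the induction started on those lines: the strip does not actually ``enlarge by $\cos\theta$ at each step.'' But you do not need this detour: your own cancellation computation is uniform in $(m,\ell)\in\Z^2$, i.e.\ the single condition \eqref{spectral} removes \emph{every} pole of the globally meromorphic function \eqref{final-ansatz} at once — which is precisely how the paper concludes that $\psi$ is entire. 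Replace the bootstrap by that observation and the proof is complete.
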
  
\begin{proof}
 The zeros of the denominator in (\ref{final-ansatz}) are given by
\begin{equation}
u=q^{2n}s^{\pm 1},\quad n\in\mathbb{Z},
\end{equation}
which all are simple provided $s\not\in \pm q^{\Z}$. 
The numerator of (\ref{final-ansatz}) vanishes for any $u$ satisfying the equation
\begin{equation}\label{cancellation}
G_q(u,\energy )=-\xi\overline{G_q(u,\energy )}.
\end{equation}
On the other hand, the identity $W(q^{2n}s^{\pm 1})=0$ is equivalent to
\begin{equation}\label{GG}
G_q(q^{2n}s,\energy )=G_q(s,\energy )
\end{equation}
which means that the infinite number of cancellation conditions arising from (\ref{cancellation}) at $u=q^{2n} s^{\pm1}$
reduce to the  single equation~\eqref{spectral}.
\end{proof}
\noindent\textbf{Remark.} The spectral equation~\eqref{spectral} together with the  (anti)symmetry property~\eqref{antisym}
implies  that
\begin{equation}
\xi^2 = 1
\end{equation}
regardless of our previous parity and reality considerations.

\section{Numerical study for $\hbar=2\pi\ii$}\label{sec5}

\subsection{The structure of $\energy =\energy (\sigma)$}

Let us remind the definition of the Wronskian (\ref{wron}):
\begin{equation}\label{Wron2}
W(u,\energy ):=\bchi\left(\frac{u}{q^2},\energy \right)\check\bchi(u,\energy )-\check\bchi\left(\frac{u}{q^2},\energy \right)\bchi(u,\energy )=\varrho(\energy )\theta_1(su)\theta_1(s^{-1}u).
\end{equation}
Let $\sigma$ be fixed ($s=\EXP^{2\pi\bb\sigma}$). Consider the following equation relating $\sigma$ and  $\energy $:
\begin{equation}\label{Eeq}
W(\EXP^{2\pi\bb\sigma},\energy )=0.
\end{equation}
We have used the Newton method to solve (\ref{Eeq}) with respect to $\energy $. This method uses an initial point as input, and different initial points provide different solutions:
\begin{equation}
\energy \in\{\energy _1,\energy _2,\energy _3,\dots\}
\end{equation}
Varying $\sigma$, one obtains
\begin{equation}
\energy (\sigma)\in\{\energy _1(\sigma),\energy _2(\sigma),\energy _3(\sigma),\dots\},
\end{equation}
where $\energy _k(\sigma)$ stands for $k^{th}$ sheet of the Riemann surface of the multivalued function $\energy (\sigma)$.

If $\energy $ and $\sigma$ are related by (\ref{Eeq}), then one has
\begin{equation}
W(\EXP^{2\pi\bb(\pm \sigma + \ii \bb n + \ii \bb^{-1}m)},\energy )=0,\quad \forall n,m\in\mathbb{Z}.
\end{equation}
We deduce that
\begin{equation}
\energy (\sigma)=\energy (\pm\sigma + \ii \bb n +\ii\bb^{-1} m).
\end{equation}
In general, this relation involves exchanges of sheets of the Riemann surface. However, more careful numerical studies show that 
\begin{itemize}
\item quantisation corresponds to only real $\sigma$;
\item all branching points of $\energy =\energy (\sigma)$ lie outside the real axis.
\end{itemize}
Numerical studies show that indeed 
\begin{equation}
\energy _k(\sigma)=\energy _k(-\sigma)=\energy _k(2\sin\theta-\sigma),\quad \sigma\in\mathbb{R},
\end{equation}
for the same $k^{th}$ sheet of the Riemann surface. In what follows, the image of the function $\energy _k(\sigma)$ with $\sigma\in\mathbb{R}$ will be called \emph{orbit}. The principal domain for $\energy _k(\sigma)$ is then
\begin{equation}
\sigma\in\mathbb{R},\qquad 0\leq\sigma\leq\sin\theta.
\end{equation}

In the numerical studies we use $\bb=\EXP^{\frac{\ii\pi}{4}}$ which corresponds to
\begin{equation}
\theta=\frac{\pi}{4},\quad \sin\theta=\cos\theta=\frac{1}{\sqrt{2}},
\quad
q=\EXP^{-\pi}.
\end{equation}

Figs~\ref{fig1} and  \ref{fig2} show the orbits of $\energy _1(\sigma)$ and  $\energy _2(\sigma)$ respectively in the complex plane. 
Two  branching points for the branch cuts between $\energy _1$ and $\energy _2$ are located approximately at
\begin{equation}
\sigma \approx \frac{1}{\sqrt{2}} \pm 0.007 \bb.
\end{equation}

The right hand side of Fig.~\ref{fig2} shows the gap between $\energy _1$ and $\energy _2$. Fig.~\ref{fig3} shows the orbit $\energy _3(\sigma)$. The right hand side of Fig.~3 shows all three orbits in the log scale.

\begin{figure}[ht]
\begin{center}
\includegraphics*[scale=0.4]{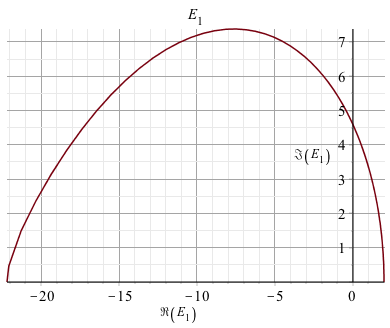}
\caption{The orbit of $\energy =\energy _1(\sigma)$. Turning points (endpoints on the plot) are $\energy _1(0)=1.9962511523$ and $\energy _1(\sin\theta)=-22.1838257068$.}
\label{fig1}
\end{center}
\begin{center}
\includegraphics*[scale=0.4]{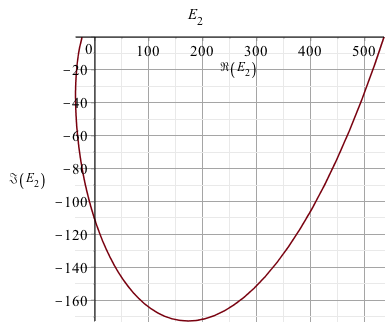}\hskip 1cm
\includegraphics*[scale=0.4]{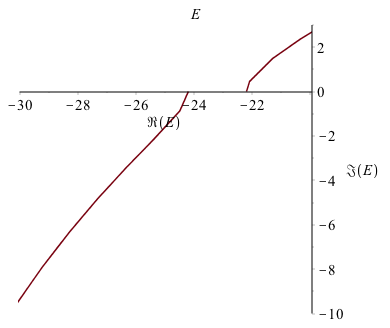}
\caption{Left: the orbit of $\energy =\energy _2(\sigma)$. Turning points are $\energy _2(0)=535.49351947$ and $\energy _2(\sin\theta)=-24.183825694$. Right: the orbits of $\energy _1(\sigma)$ and $\energy _2(\sigma)$ in the vicinity of $\sigma=1/\sqrt{2}$. Recall that $\energy _1(1/\sqrt{2})=-22.1838257068$ vs. $\energy _2(1/\sqrt{2})=-24.183825694$.}
\label{fig2}
\end{center}
\begin{center}
\includegraphics*[scale=0.4]{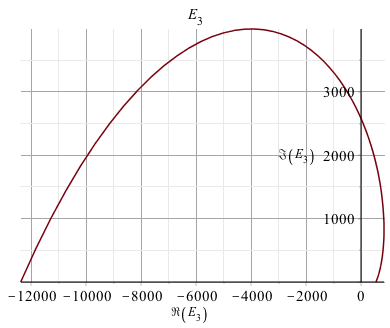}\hskip 1cm
\includegraphics*[scale=0.4]{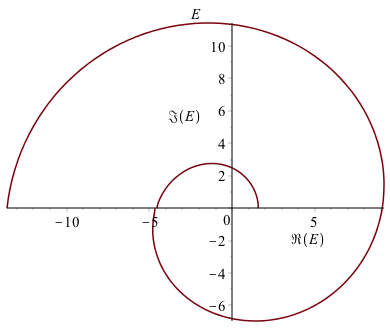}
\caption{Left: the orbit of $\energy =\energy _3(\sigma)$. Turning points are $\energy _3(0)=535.49726832$ and $\energy _3(\sin\theta)=-12391.6479693$. Right: Collection of $\energy _1(\sigma)$, $\energy _2(\sigma)$ and $\energy _3(\sigma)$ in the logarithmic scale. Low resolution does not allow one to see the gaps between the orbits. Approximation of this spiral trajectory is $\energy (\sigma)\simeq \EXP^{2\pi\bb\sigma}$, $\sigma\geq 0$.}
\label{fig3}
\end{center}
\end{figure}
It is worth discussing the endpoints $\sigma=0$ and $\sigma=\sin\theta$ in general. The point $\sigma=0$ gives (see (\ref{Wron2}))
\begin{equation}
W(1,\energy )=\bchi(1,\energy )\left( \bchi(q^{-2},\energy )-q^2\bchi(q^2,\energy )\right)=0,
\end{equation}
and the case
\begin{equation}
\bchi(1,\energy )=0
\end{equation}
gives $\energy _k(0)$ for even sheets\footnote{It suffices to substitute $\ds \energy =q^{-2k}\sum_{n=0}^\infty c_nq^{2n}$ into $\bchi(1)$ as a series with respect to $q^2$ and equate the coefficients of each power $q^{2n}$ to zero.},
\begin{equation}
\begin{array}{l}
\ds \energy _2(0)=\frac{1}{q^2} \left( 1+q^4-q^6-q^8-q^{10}-2q^{12}-q^{14}+q^{18}+6q^{20}+11q^{22}+\dots\right),\\
[4mm]
\ds \energy _4(0)=\frac{1}{q^4} \left( 1+2q^8+q^{14}-q^{18}-3q^{20}-8q^{22}-13q^{24}+\dots\right),\\
[4mm]
\ds \energy _6(0)=\frac{1}{q^6} \left( 1+2q^{12}+q^{22}+q^{24}+q^{26}+2q^{32}+2q^{34}+\dots\right),
\end{array}
\end{equation}
and so on. Another branch 
\begin{equation}
\bchi(q^{-2},\energy )-q^2\bchi(q^2,\energy )=0
\end{equation}
gives $\energy _k(0)$ for the odd sheets:
\begin{equation}
 \energy _1(0)
 =2-2q^2-4q^4-2q^6+14q^8+50q^{10}+40q^{12}-268q^{14}-1136q^{16}+\dots,
\end{equation}
\begin{equation}
 \energy _3(0)
 =\frac{1}{q^2} (1+3q^4+3q^6+q^8-15q^{10}-52q^{12}-43q^{14}+264q^{16}+1127q^{18}+\dots),
\end{equation}
and so on.

The point $\sigma=\sin\theta$ gives 
\begin{equation}
W(-q^{-1},\energy )\sim\left(\bchi(-q^{-1},\energy )-q\bchi(-q,\energy )\right)\left(\bchi(-q^{-1},\energy )+q\bchi(-q,\energy )\right)
\end{equation}
so that  
\begin{equation}\label{minus}
\bchi(-q^{-1},\energy )-q\bchi(-q,\energy )=0
\end{equation}
gives the values for $\energy _k(\sin\theta)$ on the odd sheets, for instance
\begin{equation}
\energy _1(\sin\theta)=-q^{-1}\left( 1-q+q^2-q^4-q^6+q^7-2q^8+2q^9-2q^{10}+5q^{11}-4q^{12}+\dots\right).
\end{equation}
Equation 
\begin{equation}\label{plus}
\bchi(-q^{-1},\energy )+q\bchi(-q,\energy )=0
\end{equation}
gives the values for $\energy _k(\sin\theta)$ on the even sheets, for instance
\begin{equation}
\energy _2(\sin\theta)=-q^{-1}\left( 1+q+q^2-q^4-q^6-q^7-2q^8-2q^9-2q^{10}-5q^{11}-4q^{12}-\dots\right).
\end{equation}

\subsection{The spectrum}
We turn now to the spectrum of the Hamiltonians (\ref{Hams}).

The quantisation conditions (\ref{spectral}) for real $\sigma$ are
\begin{equation}
\textrm{Im}(G_q(s,\energy ))=0,\qquad \xi=-1,\quad \textrm{odd state,}
\end{equation}
and
\begin{equation}
\textrm{Re}(G_q(s,\energy ))=0,\qquad \xi=1,\quad \textrm{even state.}
\end{equation}

One can see that formally there are two points $\sigma=0$ and $\sigma=\sin\theta$ on each sheet of the Riemann surface of $\energy(\sigma)$ corresponding to these quantisation conditions. These are the exceptional cases as  denominator $W(u)$ has the second order zeros while in the cases $\sigma=0$ for odd sheets, and $\sigma=\sin\theta$ for all sheets, the numerator of (\ref{final-ansatz}) has the first order zeros. Therefore, all these cases must be eliminated from the spectrum. On the other hand,  if $\sigma=0$ on even sheets, corresponding to $\bchi(1)=0$, the numerator of (\ref{final-ansatz}) has second order zeros at some second order zeros of the denominator, but not at all of them and therefore the wave function (\ref{final-ansatz}) could not be a holomorphic function either. Theorem~\ref{spectral-theorem} concerns only the elimination of simple poles, and further analysis is needed to completely understand the situation with double poles.

\subsubsection{Sheet 1.}
Real and imaginary parts of $G_q(s,\energy _1(\sigma))$ (the first sheet) are plotted in Fig.~\ref{fig4}.

\begin{figure}[ht]
\begin{center}
\includegraphics*[scale=0.4]{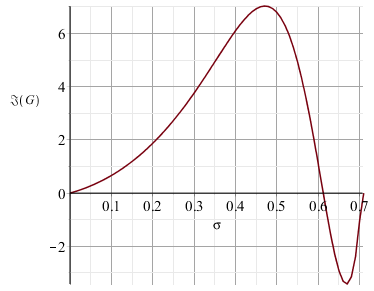}\hskip 1cm
\includegraphics*[scale=0.4]{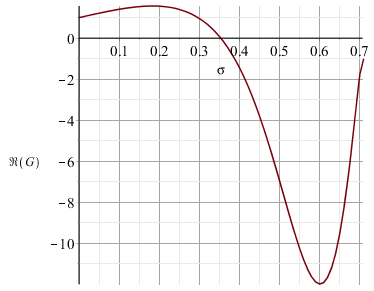}
\caption{Left: $\Im(G_q(s,\energy _1))$ as function of $\sigma$, $0\leq\sigma\leq\sin\theta$. Right: the same for $\Re(G_q(s,\energy _1))$.}
\label{fig4}
\end{center}
\end{figure} 
An even state from the first sheet is given by
\begin{equation}
\sigma = \frac{1}{2}\sin\theta,\quad  \energy _1(\sigma) = 4.594 358 809 836 918 94 \ii.
\end{equation}
This is the ground state.

An odd state from the first sheet is given by
\begin{equation}
\sigma=0.612 117 371 646 167 267 5 , \quad\energy _1(\sigma) = -13.878 304 778 036 690 6 +6.161 296 243 244 348 685\ii.
\end{equation}
The cases corresponding to $\sigma=0$ and $\sigma=\sin\theta$ are discarded.

\subsubsection{Sheet 2.} Imaginary and real parts of the function $\ds \frac{G_q(s,\energy _2(\sigma))}{|G_q(s,\energy _2(\sigma))|}$ (on the second sheet) are plotted in Fig.~\ref{fig5}.

\begin{figure}[ht]
\begin{center}
\includegraphics*[scale=0.4]{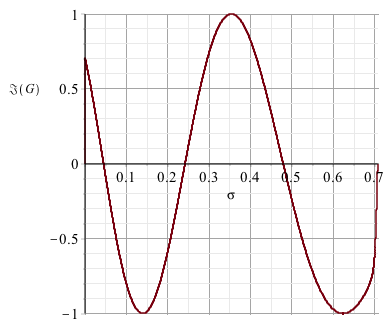}\hskip 1cm
\includegraphics*[scale=0.4]{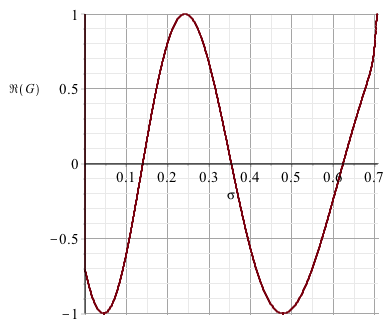}
\caption{Left: $\ds\frac{\Im(G_q(s,\energy _2))}{|G_q(s,\energy _2)|}$ as function of $\sigma$, $0\leq\sigma\leq \sin\theta$. Right: the same for $\ds\frac{\Re(G_q(s,\energy ))}{|G_q(s,\energy _2)|}$. Behaviour near $\sigma=0$ is misleading since one has an ambiguity $0/0$ there.}
\label{fig5}
\end{center}
\end{figure} 

\begin{table}[h]
\[
 \begin{array}{|c|c|}
 \hline
  \sigma& \energy_2(\sigma)\\
  \hline
 0.044 907 405 413 666 898 6& 429. 937 612 699 070 933 - 86.935 286 983 923 622 8\ii\\
 0.241 612 973 133 940 861&  87.333 249 871 603 300 85 - 160.859 744 733 070 428 \ii\\
 0.478 766 031 821 187 121& -33.715 476 768 740 864 9 - 54.171 056 749 691 862 2\ii\\
 \hline
  \end{array}
\]
\caption{The odd states from the second sheet.}
\label{tb1}
\end{table}
The odd and even states from the second sheet are given in Tables~\ref{tb1} and \ref{tb2} respectively.
\begin{table}[h]
\[
 \begin{array}{|c|c|}
 \hline
  \sigma& \energy_2(\sigma)\\
  \hline
0.139 460 116 715 428 804& 234.614 101 715 470 239  - 167.345 168 305 129 794 \ii\\
 \frac{1}{2}\sin\theta&   -111.300 184 113 096 796 \ii\\
  0.623 413 635 048 467 267&   -31.325 044 899 672 604 73 - 12.153 338 942 267 673 58\ii\\
 \hline
  \end{array}
\]
\caption{The even states from the second sheet.}
\label{tb2}
\end{table}
The endpoint on the second orbit is given by
\begin{equation}
\sigma = 0,\quad \energy _2(0)=535. 493 519 473 629 469,
\end{equation}
corresponding to $\bchi(1,\energy_2)=0$, but this value does not correspond to an eigenvalue since it is not possible  to cancel all second order poles in the wave function.

\subsection{Table of the spectrum for $\bb=\EXP^{\ii\pi/4}$}
We summarize the spectrum from the first and second sheets in Tables~\ref{tb3} and  \ref{tb4} respectively.
In our numerics we have used the precision $10^{-40}$, therefore all our digits for the energies are precise.
\begin{table}[h]
\[
 \begin{array}{|c|c|c|}
 \hline
  \sigma& \text{parity}& \energy_1(\sigma)\\
  \hline
0.35& \text{even}&4.594 358 809 836 918 94 \ii\\
0.61&\text{odd}&    -13.878 304 778 036 690 6 +6.161 296 243 244 348 685\ii\\
 \hline
  \end{array}
\]
\caption{The states from the first sheet, $k=1$. }
\label{tb3}
\end{table}
\begin{table}[h]
\[
 \begin{array}{|c|c|c|}
 \hline
  \sigma& \text{parity}& \energy_2(\sigma)\\
  \hline
0.62& \text{even}& -31.325 044 899 672 604 73 - 12.153 338 942 267 673 58\ii\\
0.48&\text{odd}&  -33.715 476 768 740 864 9 - 54.171 056 749 691 862 2\ii\\
0.35&\text{even}& -111.300 184 113 096 796 \ii\\
0.24&\text{odd}& 87.333 249 871 603 300 85 - 160.859 744 733 070 428 \ii\\
0.14&\text{even}&234.614 101 715 470 239  - 167.345 168 305 129 794 \ii\\
0.045&\text{odd}& 429. 937 612 699 070 933 - 86.935 286 983 923 622 8\ii\\
 \hline
  \end{array}
\]
\caption{The states from the second sheet, $k=2$. }
\label{tb4}
\end{table}

\section{The self-conjugate case $\hbar=2\pi$}\label{sec6}
Despite the fact that the special case $\bb=1$ corresponds to the limiting value of the strong coupling regime $\theta\to0$, we do not know how to approach that limit on the level of eigenvalues and eigenfunctions. For this reason, we treat that case separately, independently of the strong coupling regime. This case is very special as the operators $\uop$ and $\vop$ are positive self-adjoint and formally commuting, while both functional equations~(\ref{DEs}), \eqref{DEsc} reduce to one and the same but still non-trivial functional difference equation
\begin{equation}\label{psi-b=1}
\psi(x-\ii)+\psi(x+\ii)=(\energy -2\cosh(2\pi x))\psi(x),
\end{equation}
which, under the substitution
\begin{equation}
\varphi(x):=\psi(\ii x),
\end{equation}
becomes an equation of the form
\begin{equation}
\label{phi-b=1}
\varphi(x-1)+\varphi(x+1)+(2\cos(2\pi x)-\energy )\varphi(x)=0.
\end{equation}
Despite being a special case of Harper's functional equation~\cite{0370-1298-68-10-304} on the real line, this equation should nonetheless be considered on the entire complex plane as its quantum mechanical content is extracted from the restriction of its solutions to the imaginary axis. 
\subsection{The spectral curve}
Due to periodicity of the potential, for any solution of \eqref{phi-b=1}, its Weil--Gel'fand--Zak transform \cite{MR0005741,MR0039154,Zak1967}
\begin{equation}\label{wgzt}
\tilde \varphi(x,y)=\sum_{m\in\Z}\varphi(x+m)\EXP^{-2\pi\ii ym}
\end{equation}
 gives another solution which is quasi-periodic in $x$ and periodic in $y$:
 \begin{equation}\label{q-per}
\tilde \varphi(x+1,y)=\EXP^{2\pi\ii y}\tilde \varphi(x,y),\quad \tilde \varphi(x,y+1)=\tilde \varphi(x,y).
\end{equation}
That means that the functional difference equation~\eqref{phi-b=1} is transformed into an algebraic functional equation
\begin{equation}\label{alg-f-e}
\left(\cos(2\pi x)+\cos(2\pi y)-\frac{\energy }2\right)\tilde \varphi(x,y)=0,
\end{equation}
which means that the support of $\tilde \varphi(x,y)$ is localised to the \emph{spectral curve}
\begin{equation}
\Gamma_\energy :=\left\{(x,y)\in\C^2\left \vert\  \cos(2\pi x)+\cos(2\pi y)=\frac{\energy }2\right.\right\}
\end{equation}
which covers the finite (affine) part of the elliptic curve
\begin{equation}
\digamma_\energy :=\left\{(u,v)\in\C^2\left \vert\  u+u^{-1}+v+v^{-1}=\energy \right.\right\}
\end{equation}
through the map
\begin{equation}
\operatorname{q}\colon \Gamma_\energy \to \digamma_\energy ,\quad (x,y)\mapsto\left(\EXP^{2\pi\ii x},\EXP^{2\pi\ii y}\right).
\end{equation}
Projection to the first component
\begin{equation}
x\colon \Gamma_\energy \to \C
\end{equation}
is a covering map branched at the points satisfying the condition $\sin(2\pi y)=0$ which corresponds to a discrete infinite set of values of $x$:
\begin{equation}\label{brpts}
(-\ii\alpha+\Z)\cup(\ii\alpha+\Z)\cup(-\ii\beta+\Z)\cup(\ii\beta+\Z)
\end{equation}
where parameters $\alpha=\alpha(\energy )$ and $\beta=\beta(\energy )$ are determined through the equations
\begin{equation}\label{aleq}
\energy =2+2 \cosh(2\pi\alpha)=4\cosh(\pi\alpha)^2,
\end{equation}
\begin{equation}\label{beeq}
\cosh(2\pi\beta)=\cosh(2\pi\alpha)+2\quad\Leftrightarrow\quad \sinh(\pi\beta)=\pm\cosh(\pi\alpha),
\end{equation}
Since the spectrum of our operator is contained in $\R_{>4}$, we assume that $\energy \in\R_{>4}$. In that case, we can choose $\alpha$ and $\beta$ in a canonical way as real positive solutions of equations~\eqref{aleq} and \eqref{beeq}.

The automorphism group of the spectral curve $\Gamma_\energy $ contains a subgroup generated by  three elements
\begin{equation}\label{auts}
\rho,\sigma,\tau \colon \Gamma_\energy \to \Gamma_\energy ,\quad \rho(x,y)=(-x,y),\quad \sigma(x,y)=(y,x),\quad \tau(x,y)=(x+1,y).
\end{equation}
The two involutions $\rho$ and $\sigma$ cover involutions $\dot\rho$ and $\dot\sigma$ of $\digamma_\energy$, while $\tau$ acts fiber-wise, i.e.
\begin{equation}
\operatorname{q}\circ\rho=\dot\rho\circ\operatorname{q},\quad \operatorname{q}\circ\sigma=\dot\sigma\circ\operatorname{q},\quad \operatorname{q}\circ\tau=\operatorname{q}.
\end{equation}
\subsection{The deformed symplectic potential}
For the construction of a Bloch--Jost function to be defined below, we will use the restrictions to $\Gamma_\energy $ of a one parameter family of \emph{deformed (holomorphic) symplectic potentials} on $\C^2$
\begin{equation}
\theta_\lambda:=\left(x+\frac{\lambda}{\sin(2\pi x)}\right)\operatorname{d}\! y,\quad \lambda\in\C,
\end{equation}
where the non-deformed part
\begin{equation}
\theta_0=x\operatorname{d}\! y
\end{equation}
is the restriction of the standard holomorphic symplectic potential in $\C^2$, while the deforming part is given by the pull-back of a holomorphic 1-form on  $\digamma_\energy$
\begin{equation}
\frac{\operatorname{d}\! y}{\sin(2\pi x)}=-\frac{\operatorname{d}\! x}{\sin(2\pi y)}=\operatorname{q}^*(\omega),\quad
\omega=\frac{\operatorname{d}\! v}{\pi v(u-u^{-1})}=
-\frac{\operatorname{d}\! u}{\pi u(v-v^{-1})},
\end{equation}
 which is unique up to multiplication by a non-zero complex constant.   With respect to the automorphisms~\eqref{auts}, the form $\theta_\lambda$ has the following properties:
 \begin{equation}\label{tltr}
\rho^*(\theta_\lambda)=-\theta_\lambda,\quad \sigma^*(\theta_\lambda)=\operatorname{d}(xy)-\theta_\lambda,\quad
\tau^*(\theta_\lambda)=\operatorname{d}\! y+\theta_\lambda.
\end{equation}

\subsection{The Bloch--Jost function}
\begin{definition}
 A holomorphic function $f\colon \Gamma_\energy \to \C$ is called \emph{Bloch--Jost function} if it satisfies the functional relations
 \begin{equation}
f(x+1,y)\EXP^{-2\pi\ii y}=f(x,y+1)=f(x,y),\quad \forall (x,y)\in \Gamma_\energy .
\end{equation}
\end{definition}

For a fixed base point $P_0\in \Gamma_\epsilon$, let $\operatorname{p}\colon\tilde\Gamma_\energy\to \Gamma_\energy$ be the universal covering given by the homotopy classes of all paths starting from $P_0$. We define
\begin{equation}
g_{\lambda,\energy}\colon \tilde\Gamma_\energy\to \C_{\ne0},\quad \gamma\mapsto \EXP^{2\pi\ii\int_\gamma\theta_\lambda}
\end{equation}

\begin{theorem}\label{bl-jothm}
 Let $(\lambda,\energy )\in \C\times \R_{>4}$ be such that
\begin{equation}\label{speceqns}
g_{\lambda,\energy}(\gamma)=1,
\end{equation}
for any $\gamma$ for which  $x\circ \gamma$  and $\operatorname{q}\circ\gamma$ are closed paths. Then, there exists a Bloch--Jost function $f$ such that $g_{\lambda,\energy}=\operatorname{p}^*(f)$.
\end{theorem}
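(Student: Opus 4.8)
The plan is to build $f$ by integrating $\theta_\lambda$ from the base point and then to read off the two Bloch--Jost relations from the behaviour of $\theta_\lambda$ under the automorphisms \eqref{auts}. For $Q\in\Gamma_\energy$ let $\gamma_Q$ be a path from $P_0$ to $Q$ and set $f(Q):=g_{\lambda,\energy}([\gamma_Q])=\EXP^{2\pi\ii\int_{\gamma_Q}\theta_\lambda}$. First I would check that this is independent of $\gamma_Q$: two choices differ by a loop $\alpha\in\pi_1(\Gamma_\energy,P_0)$, and every such loop has both $x\circ\alpha$ and $\operatorname{q}\circ\alpha$ closed (they return to $P_0$), so the hypothesis \eqref{speceqns} gives $g_{\lambda,\energy}(\alpha)=1$. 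Hence $f$ is a well-defined function on $\Gamma_\energy$ with $\operatorname{p}^*(f)=g_{\lambda,\energy}$, and it is holomorphic because $\int\theta_\lambda$ is a local holomorphic primitive and $\operatorname{p}$ is a local biholomorphism. The relevant loops are generated by small circles around the branch points \eqref{brpts} and by the two translation cycles used below, and all of these are captured by the closure conditions.

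To extract the functional relations I would transport $\gamma_Q$ by the two commuting translations $\tau\colon(x,y)\mapsto(x+1,y)$ from \eqref{auts} and the vertical translation $t\colon(x,y)\mapsto(x,y+1)$, using fixed connecting paths $\delta\colon P_0\to\tau P_0$ and $\delta_t\colon P_0\to tP_0$. Since $\theta_\lambda$ contains $y$ only through $\operatorname{d} y$, it is invariant under $t$, whence $f(tQ)=\EXP^{2\pi\ii\int_{\delta_t}\theta_\lambda}f(Q)$; as $\delta_t$ has both $x\circ\delta_t$ and $\operatorname{q}\circ\delta_t$ closed, \eqref{speceqns} forces the prefactor to be $1$ and yields the periodicity $f(x,y+1)=f(x,y)$. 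In the horizontal direction the rule $\tau^*\theta_\lambda=\operatorname{d} y+\theta_\lambda$ of \eqref{tltr} manufactures exactly the quasi-periodic factor, giving
\[
f(\tau Q)=C\,\EXP^{2\pi\ii y}\,f(Q),\qquad C:=\EXP^{2\pi\ii\left(\int_\delta\theta_\lambda-y(P_0)\right)},
\]
so that the assertion reduces to the single statement $C=1$.

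The step I expect to be the main obstacle is precisely $C=1$, because the connecting path $\delta$ runs once around the $u$-cycle of $\digamma_\energy$ --- so $\operatorname{q}\circ\delta$ is closed --- but its $x$-projection increases by $1$ and is \emph{not} closed, so the hypothesis does not apply to $\delta$ as it did to $\delta_t$. I would evaluate the period by splitting $\theta_\lambda=x\operatorname{d} y+\lambda\,\operatorname{q}^*(\omega)$: integration by parts gives $\int_\delta x\operatorname{d} y=y(P_0)-\int_\delta y\operatorname{d} x$, while the deformed part equals $\lambda\oint_{\operatorname{q}\circ\delta}\omega$, a period of the holomorphic differential $\omega$; thus $C=1$ is equivalent to $\lambda\oint_{\operatorname{q}\circ\delta}\omega-\int_\delta y\operatorname{d} x\in\Z$. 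This is the vanishing of the monodromy of $g_{\lambda,\energy}$ along the $\operatorname{q}$-closed generator and is supplied by \eqref{speceqns}, which is exactly the role played by the $\operatorname{q}\circ\gamma$-closed condition as distinct from the $x\circ\gamma$-closed condition used for $t$. As a compatibility check I would verify that the commutator loop $\delta\cdot(\tau\circ\delta_t)\cdot(t\circ\delta^{-1})\cdot\delta_t^{-1}$ lies in $\pi_1(\Gamma_\energy)$ and, using \eqref{tltr}, has $\int\theta_\lambda=1$; since $\EXP^{2\pi\ii}=1$ this is automatically consistent with $g_{\lambda,\energy}=1$ and confirms that the $\tau$- and $t$-laws impose no hidden further constraint.
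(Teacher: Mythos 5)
Your construction of $f$, the well-definedness argument via loops (whose projections are automatically closed), and the proof of $y$-periodicity all match the paper's proof in substance. The genuine gap is at the step you yourself flag as the main obstacle, namely $C=1$. After correctly reducing it to the integrality $\lambda\oint_{\operatorname{q}\circ\delta}\omega-\int_\delta y\operatorname{d}\!x\in\Z$, and correctly observing that the hypothesis~\eqref{speceqns} does \emph{not} apply to $\delta$ because $x\circ\delta$ is not closed, you then simply assert that this integrality ``is supplied by \eqref{speceqns}'', invoking ``the $\operatorname{q}\circ\gamma$-closed condition as distinct from the $x\circ\gamma$-closed condition''. That misreads the hypothesis: it is a single condition requiring \emph{both} projections to be closed simultaneously, and it contains no clause about merely $\operatorname{q}$-closed paths. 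Nothing in \eqref{speceqns} directly constrains the period of $\theta_\lambda$ along $\delta$, so your argument is circular precisely where all the content of the theorem lies. (Indeed, if the hypothesis did cover merely $\operatorname{q}$-closed paths, it would impose a third quantisation condition independent of the two spectral equations~\eqref{speceqs1}, which is not what is assumed.)

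The missing idea --- and the crux of the paper's proof --- is the involution $\sigma(x,y)=(y,x)$ together with the second relation in~\eqref{tltr}, $\sigma^*(\theta_\lambda)=\operatorname{d}(xy)-\theta_\lambda$, which you quote for $\tau^*$ but never use for $\sigma^*$. Writing $\delta$ from $P_0=(x_0,y_0)$ to $(x_0+1,y_0)$, the path $\sigma\circ\delta$ runs from $(y_0,x_0)$ to $(y_0,x_0+1)$, so \emph{both} $x\circ\sigma\circ\delta$ and $\operatorname{q}\circ\sigma\circ\delta$ are closed, and \eqref{speceqns} does apply to it. Combining the pull-back relation with your own integration by parts gives
\begin{equation*}
\int_{\sigma\circ\delta}\theta_\lambda=\int_\delta\bigl(\operatorname{d}(xy)-\theta_\lambda\bigr)
=y(P_0)-\Bigl(y(P_0)-\int_\delta y\operatorname{d}\!x+\lambda\oint_{\operatorname{q}\circ\delta}\omega\Bigr)
=\int_\delta y\operatorname{d}\!x-\lambda\oint_{\operatorname{q}\circ\delta}\omega,
\end{equation*}
so \eqref{speceqns} applied to $\sigma\circ\delta$ yields exactly $\lambda\oint_{\operatorname{q}\circ\delta}\omega-\int_\delta y\operatorname{d}\!x\in\Z$, i.e.\ $C=1$. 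This is how the paper proceeds: it takes a path $\nu$ from $(x,y)$ to $(x+1,y)$ and evaluates $\EXP^{2\pi\ii\int_\nu\theta_\lambda}=\EXP^{2\pi\ii y}$ via $\int_\nu\theta_\lambda=\int_\nu\operatorname{d}(xy)-\int_{\sigma\circ\nu}\theta_\lambda$, applying the hypothesis to $\sigma\circ\nu$. With these two lines inserted, your argument closes and is otherwise equivalent to the paper's; your final commutator-loop consistency check is correct but, as you note, carries no additional constraint.
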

\begin{proof}
The fact that $g_{\lambda,\energy}=\operatorname{p}^*(f)$ for some $f\colon \Gamma_\energy \to \C$ is immediate. It suffices to remark that any closed path on $\Gamma_\energy $ projects down to closed paths both in $\C$ and $\digamma_\energy $. So, we need to verify only the quasi-periodicity properties of $f$.
 
 Let  paths $\mu, \nu\colon [0,1]\to \Gamma_\energy $ be such that
\begin{equation}
\mu(0)=P_0,\quad \mu(1)=\nu(0)=(x,y),\quad \nu(1)=(x,y+1).
\end{equation}
Then, we have
\begin{equation}
f(x,y+1)=\EXP^{2\pi\ii \int_{\mu\cdot\nu}\theta_\lambda}=\EXP^{2\pi\ii \int_{\mu}\theta_\lambda}\EXP^{2\pi\ii \int_{\nu}\theta_\lambda}=f(x,y),
\end{equation}
since both $x\circ \nu$  and $\operatorname{q}\circ\nu$ are closed.

Redefine now $\mu, \nu\colon [0,1]\to \Gamma_\energy $ so that
\begin{equation}
\mu(0)=P_0,\quad \mu(1)=\nu(0)=(x,y),\quad \nu(1)=(x+1,y).
\end{equation}
Then, we have
\begin{equation}
f(x+1,y)=\EXP^{2\pi\ii \int_{\mu\cdot\nu}\theta_\lambda}=\EXP^{2\pi\ii \int_{\mu}\theta_\lambda}\EXP^{2\pi\ii \int_{\nu}\theta_\lambda}=f(x,y)\EXP^{2\pi\ii \int_{\nu}\theta_\lambda},
\end{equation}
where, this time, the path $x\circ \nu$ is not closed. Using the secon equality in \eqref{tltr},
we have
\begin{equation}
\EXP^{2\pi\ii\int_{\nu}\theta_\lambda}=\EXP^{2\pi\ii\int_{\sigma\circ\nu}\sigma^*\theta_\lambda}
=\EXP^{2\pi\ii\int_{\sigma\circ\nu}\operatorname{d}(xy)}\EXP^{-2\pi\ii\int_{\sigma\circ\nu}\theta_\lambda}
=\EXP^{2\pi\ii y}
\end{equation}
where we have used the fact that the paths $x\circ\sigma\circ \nu$  and $\operatorname{q}\circ\sigma\circ\nu$ are closed.
\end{proof}
\subsection{The spectral equations}
We call equations~\eqref{speceqns} \emph{spectral equations}. Our goal here is to reduce them to a minimal set so that any other equation is a consequence of the equations from that minimal set. By using two functions 
\begin{equation}
r\colon \R\to [\alpha,\beta],\quad \cosh(2\pi r(t))=1-\cos(\pi t)+\cosh(2\pi\alpha),
\end{equation}
and 
\begin{equation}
s\colon \R\to [0,\alpha],\quad \sinh(\pi s(t))=\sinh(\pi\alpha)\sin(\pi t),
\end{equation}
 we define  four paths $\xi,\check\xi,\zeta,\hat\zeta\colon [0,1]\to \Gamma_\energy$ by
\begin{equation}\label{xi}
\xi(t)=(\ii s(t+1/2),\ii s(t)),\quad \check\xi(t)=(-\ii s(t+1/2),-\ii s(t)).
\end{equation}
and
\begin{equation}\label{zeta}
\zeta(t)=(\ii r(t),t/2),\quad \hat\zeta(t)=(\ii r(1-t),(t+1)/2).
\end{equation}
We observe that $(\xi,\check\xi)$ and $(\zeta,\hat\zeta)$ are composable pairs, $\xi\cdot\check\xi$ being closed, while $\zeta\cdot \hat\zeta$ projects down to closed paths both in $\C$ and $\digamma_\energy$. Moreover, their images generate the fundamental group of $\digamma_\energy$, which means that the equations
\begin{equation}\label{speceqs1}
\EXP^{2\pi\ii \int_{\xi\cdot\check\xi}\theta_\lambda}=\EXP^{2\pi\ii \int_{\zeta\cdot\hat\zeta}\theta_\lambda}=1
\end{equation}
constitute a complete set of spectral equations.
We calculate
\begin{equation}\label{int1=0}
\int_{\xi\cdot\check\xi}\theta_\lambda=2\int_{\xi}\theta_\lambda=4\int_{0}^{1/2}\left( \frac{\lambda}{\sinh(2\pi s(t+1/2))}-s(t+1/2)\right)s'(t)\operatorname{d}\!t=:A\lambda-\tilde A
\end{equation}
and 
\begin{equation}\label{int1=1}
\int_{\zeta\cdot\hat\zeta}\theta_\lambda=2\int_{\zeta}\theta_\lambda=\ii\int_{0}^{1}\left(r(t)- \frac{\lambda}{\sinh(2\pi r(t))}\right)\operatorname{d}\!t=:\ii(\tilde B-\lambda B)
\end{equation}
where $A, \tilde A, B,\tilde B$ are positive functions of $\energy$ satisfying the inequality
\begin{equation}\label{abba}
A\tilde B-B\tilde A>0
\end{equation}
which is seen if we write out all integrals explicitly:
\begin{equation}
 A\tilde B-B\tilde A=4\int_{[0,1/2]\times[0,1]}\operatorname{d}(t,u)\frac{s'(t)\left(r(u)\sinh(2\pi r(u))- s(t+1/2)\sinh(2\pi s(t+1/2))\right)}{\sinh(2\pi s(t+1/2))\sinh(2\pi r(u))}
\end{equation}
and take into account the inequalities
\begin{equation}
0\le s(t)\le\alpha\le r(u)\le\beta,\quad \forall t,u\in\R.
\end{equation}
Now, the spectral equations~\eqref{speceqs1} imply that 
\begin{equation}\label{qc1}
A\lambda-\tilde A=:n+1\in\Z, \quad \ii(\tilde B-\lambda B)\in\Z,
\end{equation}
so that the second condition fixes a unique choice for $\lambda$,
\begin{equation}\label{lamb}
\lambda=\tilde B/B,
\end{equation}
while the first condition becomes a quantisation condition for the energy:
\begin{equation}
A\tilde B-B\tilde A =(n+1)B,\quad n\in\Z_{\ge0},
\end{equation}
where we have taken into account the inequality~\eqref{abba}. 

Numerical calculation at $n=0$ for the ground state  energy $\energy _0$ gives the result
\begin{equation}
\log(\energy  _0)=2.88181542992629678247713987172363292221616219\dots
\end{equation}
which is in agreement with formula~(5.39) of \cite{MR3556519}.
\subsection{The eigenfunctions}
\begin{theorem}
 Let $(\lambda,\energy )\in \C\times \R_{>4}$ satisfy the assumptions of Theorem~\ref{bl-jothm}, and choose the base point  $P_0=(\ii\alpha,0)\in\Gamma_\energy $. Then, there  exists a holomorphic function $\varphi \colon\C\to\C$  such that
\begin{equation}\label{soln}
\sin(2\pi y)\varphi(x)=\sin\left(2\pi\int_{\gamma}\theta_\lambda\right),\quad \forall (x,y)\in \Gamma_\energy ,
\end{equation}
for any path $\gamma \colon [0,1]\to \Gamma_\energy $ such that $\gamma(0)=P_0$ and $\gamma(1)=(x,y)$. In particular, $\varphi(x)$ is a solution of the functional difference equation~\eqref{phi-b=1}.
\end{theorem}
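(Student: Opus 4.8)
The plan is to build $\varphi$ directly from the Bloch--Jost function supplied by Theorem~\ref{bl-jothm} and then to recognize the three-term relation~\eqref{phi-b=1} as a shadow of the quasi-periodicity of that function combined with the defining equation of $\Gamma_\energy$. Write $f$ for the Bloch--Jost function with $g_{\lambda,\energy}=\operatorname{p}^*(f)$, and set $F(x,y):=\sin\big(2\pi\int_\gamma\theta_\lambda\big)=\tfrac1{2\ii}\big(f-f^{-1}\big)$. Since the spectral equations~\eqref{speceqns} force $\int_c\theta_\lambda\in\Z$ around every closed loop $c$ on $\Gamma_\energy$ (both $x\circ c$ and $\operatorname{q}\circ c$ are then automatically closed), the quantity $F$ is a single-valued holomorphic function on $\Gamma_\energy$, independent of the chosen path $\gamma$ from $P_0$; this is essentially the content of Theorem~\ref{bl-jothm}. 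I would then \emph{define} $\varphi$ by $\varphi(x):=F(x,y)/\sin(2\pi y)$ and spend the bulk of the argument proving that this is a well-defined entire function of $x$ alone.

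First I would establish two symmetries of $F$. The $y$-periodicity $F(x,y+1)=F(x,y)$ is immediate from the Bloch--Jost relation $f(x,y+1)=f(x,y)$. For the parity $F(x,-y)=-F(x,y)$, I would use the automorphism $\sigma\rho\sigma$, which sends $(x,y)\mapsto(x,-y)$ and fixes the base point $P_0=(\ii\alpha,0)$; a short computation from~\eqref{tltr} gives $(\sigma\rho\sigma)^*\theta_\lambda=-\theta_\lambda$, so that pulling back along $\sigma\rho\sigma\circ\gamma$ reverses the sign of the integral and hence of $\sin$. Since $\sin(2\pi y)$ enjoys exactly the same periodicity and parity, the ratio $F/\sin(2\pi y)$ takes one and the same value at every point $(x,\pm y+n)$ of the fiber of the projection $x\colon\Gamma_\energy\to\C$; this is precisely what makes $\varphi$ a genuine function of $x$.

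The hard part will be holomorphy, in particular at the branch points~\eqref{brpts} where $\sin(2\pi y)=0$, i.e.\ $y\in\tfrac12\Z$. At $y=0$ the parity forces $F$ to be odd in the local coordinate $y$ (recall $x-x_0\sim y^2$ there), hence $F$ vanishes to first order, matching the simple zero of $\sin(2\pi y)$; at $y=\tfrac12$ the combination of periodicity and parity gives $F(x,\tfrac12)=-F(x,\tfrac12)=0$, together with odd-order vanishing in $y-\tfrac12$. In both cases the quotient has a removable singularity and defines a holomorphic function of the local coordinate $x-x_0$, so $\varphi$ extends across every branch point; away from them $\varphi$ is holomorphic and single-valued by the fiberwise argument above, and therefore $\varphi\colon\C\to\C$ is entire.

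Finally I would verify~\eqref{phi-b=1}. Multiplying the asserted identity by $\sin(2\pi y)$ and using that $(x\pm1,y)\in\Gamma_\energy$ whenever $(x,y)\in\Gamma_\energy$ (because $\cos$ is $1$-periodic), it suffices to show $F(x-1,y)+F(x+1,y)+(2\cos(2\pi x)-\energy)F(x,y)=0$. The Bloch--Jost relation $f(x\pm1,y)=\EXP^{\pm2\pi\ii y}f(x,y)$ yields at once $F(x+1,y)+F(x-1,y)=2\cos(2\pi y)F(x,y)$, whereupon the left side collapses to $\big(2\cos(2\pi y)+2\cos(2\pi x)-\energy\big)F(x,y)$, which vanishes precisely because $\cos(2\pi x)+\cos(2\pi y)=\energy/2$ on $\Gamma_\energy$. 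This proves~\eqref{phi-b=1} wherever $\sin(2\pi y)\neq0$, and by analyticity of the entire function $\varphi$ it then holds for all $x$.
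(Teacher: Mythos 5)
Your construction is correct and structurally the same as the paper's: the paper likewise sets $h(x,y)=\bigl(f(x,y)-f(x,-y)\bigr)/\sin(2\pi y)$ (which is your $2\ii F/\sin(2\pi y)$, resting on the same base-point identity $f(x,y)f(x,-y)=1$), descends it along the fibers $x^{-1}(a)=\{a\}\times(\{b+\Z\}\cup\{-b+\Z\})$, and then checks holomorphy at the ramification points, with the $y=0$ case ``explicitly regular'' exactly as in your parity argument. The one genuine divergence is the regularity step at $y=1/2$. The paper reduces it to the explicit condition $f(\pm\ii\beta,1/2)^2=1$, i.e.\ \eqref{cancellation-cond}, and verifies this from the spectral data: $f(\ii\beta,1/2)=\EXP^{2\pi\ii\int_\zeta\theta_\lambda}=1$ because the choice $\lambda=\tilde B/B$ in \eqref{lamb} kills that period, and $f(-\ii\beta,1/2)=(-1)^{n+1}$ via $\int_\xi\theta_\lambda=(n+1)/2$ from \eqref{qc1}. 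You obtain the same cancellation abstractly: $y$-periodicity of $f$ combined with the parity $F(x,-y)=-F(x,y)$ (your pullback computation $(\sigma\rho\sigma)^*\theta_\lambda=-\theta_\lambda$ is right, and $\sigma\rho\sigma$ indeed fixes $P_0=(\ii\alpha,0)$) forces $F(x,1/2)=-F(x,1/2)=0$, with odd-order vanishing in the local coordinate $y-1/2$ matching the simple zero of $\sin(2\pi y)$. This shortcut is legitimate because the quantisation conditions are already packaged into the single-valuedness hypothesis of Theorem~\ref{bl-jothm}: for instance the loop $\zeta\cdot\hat\zeta$ has closed projections under $x$ and $\operatorname{q}$, so $2\int_\zeta\theta_\lambda\in\Z$ is among the assumed spectral equations --- your symmetry argument consumes them implicitly where the paper spends them explicitly. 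The trade-off: the paper's computation pins down the actual boundary values $f(\ii\beta,1/2)=1$ and $f(-\ii\beta,1/2)=(-1)^{n+1}$, concrete information about the eigenfunction, while your route is shorter and, moreover, actually supplies the verification of \eqref{phi-b=1} --- via $f(x\pm1,y)=\EXP^{\pm2\pi\ii y}f(x,y)$, hence $F(x+1,y)+F(x-1,y)=2\cos(2\pi y)F(x,y)$, and the curve equation $\cos(2\pi x)+\cos(2\pi y)=\energy/2$ --- a step the paper's proof leaves implicit.
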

\begin{proof}
 The right hand side of \eqref{soln} can be rewitten in the form
 \begin{equation}
2\ii \sin\left(2\pi\int_{\gamma}\theta_\lambda\right)=f(x,y)-f(x,-y),
\end{equation}
where $f(x,y)$ is the Bloch--Jost function defined by
\begin{equation}
f(x,y):=\EXP^{2\pi\ii\int_\gamma\theta_\lambda}
\end{equation}
and the choice of the base point is such that
\begin{equation}
f(x,y)f(x,-y)=1.
\end{equation}
Thus, the function
\begin{equation}
h(x,y):=\frac{f(x,y)-f(x,-y)}{\sin(2\pi y)}
\end{equation}
is explicitly periodic and symmetric in $y$:
\begin{equation}
h(x,y+1)=h(x,y)=h(x,-y)
\end{equation}
On the other hand, if $(a,b)\in\Gamma_\energy $ then
\begin{equation}
x^{-1}(a)=\{a\}\times (\{b+\Z\}\cup\{-b+\Z\})
\end{equation}
so that
\begin{equation}
h(a,b')=h(a,b),\quad \forall (a,b')\in x^{-1}(a),
\end{equation}
i.e. there exists a function $\varphi\colon\C\to\C$ such that $h=x^*(\varphi)$. It remains to see that $\varphi(x)$ is holomorphic at any of the ramification points \eqref{brpts} corresponding to $\sin(2\pi y)=0$, i.e. either $y=0$ or $y=1/2$. The case $y=0$ is explicitly regular. To see the regularity at $y=1/2$, it suffices to see that 
\begin{equation}\label{cancellation-cond}
f(\pm\ii\beta,1/2)^2=1.
\end{equation}
In the case of plus sign, from the definition~\eqref{zeta}, it follows that
\begin{equation}\label{cancellation-cond+}
f(\ii\beta,1/2)=\EXP^{2\pi\ii\int_\zeta\theta_\lambda}=1
\end{equation}
where the last equality is satisfied due to \eqref{int1=0} and \eqref{lamb}.

In the case of minus sign in \eqref{cancellation-cond}, we verify it by using \eqref{cancellation-cond+} and the symmetry property of the Bloch--Jost function under the negation of the first component: 
\begin{equation}
f(-x,y)=f(-\ii\alpha,0)/f(x,y)=\EXP^{2\pi\ii\int_\xi\theta_\lambda}/f(x,y)=(-1)^{n+1}/f(x,y)
\end{equation}
where we have used definition~\eqref{xi} and the equation in the first part of \eqref{qc1}.
\end{proof}
Thus, we have  solved the spectral problem~\eqref{psi-b=1}, where the spectrum is determined by equations~\eqref{qc1} and the corresponding eigenfunctions are given by the formula $\psi(x)=\varphi(\ii x)$ for all $x\in \R$, where $\varphi(x)$ is defined in \eqref{soln}. These eigenfunctions coincide with the ones constructed in \cite{MarinoZakany2016}.
\def\cprime{$'$} \def\cprime{$'$}

\end{document}